\renewcommand*{\backref}[1]{}
\renewcommand*{\backrefalt}[4]{%
  \ifcase #1%
  \or [Page~#2.]%
  \else [Pages~#2.]%
  \fi%
}
\numberwithin{equation}{section}
\newtheorem{definition}{Definition}
\newtheorem{proposition}[definition]{Proposition}
\newcommand{\RR}{\mathbb{R}}
\newcommand{\ZZ}{\mathbb{Z}}
\newcommand{\so}{\mathfrak{so}}
\renewcommand{\sp}{\mathfrak{sp}}
\newcommand{\osp}{\mathfrak{osp}}
\newcommand{\g}{\mathfrak{g}}
\newcommand{\AdS}{\operatorname{AdS}}
\newcommand{\NW}{\operatorname{NW}}
\newcommand{\Cl}{\operatorname{Cl}}
\newcommand{\End}{\operatorname{End}}
\newcommand{\ddt}{\left. \frac{\mathrm{d}}{\mathrm{d}t} \right|_0}
\begin{document}

\title[Kaluza--Klein reductions of supersymmetric five-dimensional spacetimes]{Kaluza--Klein reductions of maximally supersymmetric five-dimensional lorentzian spacetimes}
\author[Figueroa-O'Farrill]{Jos\'e Figueroa-O'Farrill}
\author[Franchetti]{Guido Franchetti}
\address[JMF]{Maxwell Institute and School of Mathematics, The University
  of Edinburgh, James Clerk Maxwell Building, Peter Guthrie Tait Road,
  Edinburgh EH9 3FD, Scotland, United Kingdom}
\email{\href{mailto:j.m.figueroa@ed.ac.uk}{j.m.figueroa[at]ed.ac.uk}, ORCID: \href{https://orcid.org/0000-0002-9308-9360}{0000-0002-9308-9360}}
\address[GF]{Department of Mathematical Sciences, University of Bath,
  Claverton Down, Bath BA2 7AY, England, United Kingdom}
\email{\href{mailto: gf424@bath.ac.uk}{gf424[at]bath.ac.uk}, ORCID: \href{https://orcid.org/0000-0002-1511-6204}{0000-0002-1511-6204}}
\begin{abstract}
  A recent study of filtered deformations of (graded subalgebras of)
  the minimal five-dimensional Poincar\'e superalgebra resulted in two
  classes of maximally supersymmetric spacetimes.  One class are the
  well-known maximally supersymmetric backgrounds of minimal
  five-dimensional supergravity, whereas the other class does not seem
  to be related to supergravity.  This paper is a study of the
  Kaluza--Klein reductions to four dimensions of this latter class of
  maximally supersymmetric spacetimes.  We classify the lorentzian and
  riemannian Kaluza--Klein reductions of these backgrounds, determine
  the fraction of the supersymmetry preserved under the reduction and
  in most cases determine explicitly the geometry of the
  four-dimensional quotient.  Among the many supersymmetric quotients
  found, we highlight a number of novel non-homogeneous
  four-dimensional lorentzian spacetimes admitting $N=1$
  supersymmetry, whose supersymmetry algebra is not a filtered
  deformation of any graded subalgebra of the four-dimensional $N=1$
  Poincar\'e superalgebra.  Any of these four-dimensional lorentzian
  spacetimes may serve as the arena for the construction of new
  rigidly supersymmetric field theories.
\end{abstract}
\thanks{EMPG-22-14}
\maketitle
\tableofcontents

\section{Introduction}

It has been just over half a century since the emergence of
four-dimensional supersymmetry in a paper \cite{Golfand:1971iw} of
Golfand and Likhtman, which displayed for the first time what is now
called the four-dimensional $N=1$ Poincar\'e superalgebra. This is a Lie
superalgebra $\g = \g_{\bar0} \oplus \g_{\bar1}$, where the even
subalgebra $\g_{\bar0}$ is isomorphic to the Poincar\'e algebra and the
odd subspace $\g_{\bar1}$ is isomorphic to the four-dimensional real
spinor representation of $\g_{\bar0}$. The notation reflects the fact
that $\bar0$ and $\bar1$ are the residue classes modulo $2$ of $0$ and
$1$, respectively, in other words, the parity, and the Lie brackets
respect the parity. The Poincar\'e superalgebra also admits a compatible
$\ZZ$-grading $\g = \g_0 \oplus \g_{-1} \oplus \g_{-2}$, where $\g_0$
is isomorphic to the Lorentz algebra, $\g_{-2}$ is the translation
ideal and $\g_{-1} = \g_{\bar 1}$ is again the four-dimensional real
spinor representation.  Notice that the parity is simply the reduction
modulo $2$ of the grading, so that $\g_{\bar 0} = \g_0 \oplus
\g_{-2}$. This grading is an essential ingredient in the story, as we
will briefly explain below.

It did not take long for Zumino \cite{Zumino:1977av} to exhibit another
four-dimensional supersymmetry algebra extending this time the
isometry algebra of anti~de Sitter spacetime ($\AdS_4$).  That Lie
superalgebra, $\g \cong \osp(1|4)$, with $\g_{\bar0} \cong \so(3,2)
\cong \sp(4,\RR)$ and $\g_{\bar1}$ the four-dimensional vector
representation of $\sp(4,\RR)$, can be contracted \`a la In\"on\"u--Wigner
to the $N=1$ Poincar\'e superalgebra.

For many years, the only (lorentzian, $4$-dimensional) spacetimes
known to admit $N=1$ (i.e., $n:=\dim \g_{\bar1} = 4$) rigid (i.e.,
the metric is fixed and not a dynamical field) supersymmetry were
Minkowski and anti~de~Sitter spacetimes.  If we drop the condition
that the spacetime be lorentzian, perhaps by allowing it to be a
kinematical spacetime \cite{MR0238545, MR857383,
  Figueroa-OFarrill:2018ilb}, then one can obtain many supersymmetry
algebras (see, e.g., \cite{Puzalowski:1978rv, Palumbo:1978gx,
  Clark:1983ne, deAzcarraga:1991fa, MR769149, MR1723340,
  CampoamorStursberg:2008hm, Huang:2014ega}) extending the subclass of
kinematical Lie algebras of Bacry and L\'evy-Leblond \cite{MR0238545}
which can be obtained by contracting the isometry algebra of $\AdS_4$.
In fact, one can obtain a classification
\cite{Figueroa-OFarrill:2019ucc} of spatially isotropic kinematical
supersymmetry algebras (with four real superchanges) and their
corresponding homogeneous superspaces.

If we insist on spacetimes being lorentzian (spin) manifolds, then the
first four-dimensional examples beyond Minkowski and anti~de~Sitter
spacetimes were constructed by Festuccia and Seiberg
\cite{Festuccia:2011ws} via a decoupling limit of the theory obtained
by coupling matter (in the form of a sigma model) to off-shell
supergravity and then freezing the gravitational degrees of freedom
via a limit in which the Planck mass goes to infinity, resulting in a
background admitting rigid supersymmetry.  The supersymmetry here is
generated by the supergravity Killing spinors: sections of the spinor
bundle which are parallel relative to a connection constructed out of
the bosonic fields in the supergravity multiplet and uniquely
specified by the requirement of local supersymmetry.  Let us
elaborate briefly.

A supergravity theory is a gauge theory of supersymmetry, where the
gauge parameter is a section of a spinor bundle. The corresponding
gauge field in supergravity is called the gravitino and just as in
standard gauge theory it transforms infinitesimally as the
gauge-covariant derivative of the gauge parameter. The form of this
gauge-covariant derivative depends on the supergravity theory in
question and it is the covariant derivative associated to the
so-called gravitino connection on the spin bundle. The gravitino
connection may always be written as the spin connection modified by a
one-form with values in spinor endomorphisms, which depends a priori
on all the fields of the supergravity theory. One typically restricts
attention to so-called bosonic supergravity backgrounds: those for
which the fermionic fields are zero. Due to the nature of
supersymmetry transformations, any bosonic field configuration in such
a background is automatically invariant (infinitesimally) under a
supersymmetry transformation, but this is not necessarily the case for
the fermionic fields. On a bosonic supergravity background, the
gravitino transforms as the covariant derivative of the spinor gauge
parameter. Hence the supersymmetries which preserve the background are
generated infinitesimally by spinors which are parallel relative to
the gravitino connection. These are the supergravity Killing spinors.
If the supergravity theory contains other fermionic fields besides the
gravitino, their supersymmetric variations give rise to additional
algebraic equations Killing spinors must satisfy.

As shown in the eleven-dimensional context
\cite{Figueroa-OFarrill:2016khp} (but true in any dimension) Lie
superalgebras generated by parallel spinors (relative to \emph{any}
connection on the spinor bundle) have a particular algebraic
structure: they are filtered deformations of a graded subalgebra of
the Poincar\'e superalgebra and can be classified by generalised Spencer
cohomology \cite{Cheng:1999cy}. Moreover, the form of the putative
gravitino connection can be recovered from the cohomology
calculation. This then suggests a method of classification, whose
first step is the classification of graded subalgebras of the Poincar\'e
superalgebra.  That is not an easy problem, but one can nevertheless
classify \cite{deMedeiros:2016srz} filtered deformations of maximally
supersymmetric graded subalgebras; that is, filtered deformations of
graded subalgebras $\g = \g_0 \oplus \g_{-1} \oplus \g_{-2}$ with
$\dim \g_{-1}$ equal to the rank of the spinor bundle, which is $4$ in
this case. This recovers the supersymmetric spacetimes of Festuccia
and Seiberg \cite{Festuccia:2011ws} with one addition: a conformally
flat Cahen--Wallach \cite{MR267500} lorentzian symmetric space, whose
metric can be recognised as a bi-invariant metric on the Nappi--Witten
group \cite{Nappi:1993ie}.

This still leaves open the possibility of obtaining Lie superalgebras
$\g = \g_{\bar0} \oplus \g_{\bar1}$ with $\dim \g_{\bar1} = 4$ which
are filtered deformations of graded subalgebras of extended (i.e.,
$N>1$) Poincar\'e superalgebras.  These are supersymmetric extensions of
the Poincar\'e algebra, where the odd subspace is now $N$ copies of the
$4$-dimensional spinor representation.  Some of these extended
Poincar\'e superalgebras can be obtained by dimensional reduction from
minimal Poincar\'e superalgebras in higher dimensions and all we need to
do is break some of the supersymmetry.  This is typically achieved via
Kaluza--Klein reduction of higher dimensional supersymmetric
backgrounds.  Geometrically this corresponds to viewing the
higher-dimensional background as a principal bundle over the
four-dimensional background: the surviving four-dimensional
supersymmetry is then generated by those higher-dimensional Killing
spinors which are invariant under the structure group of the principal
bundle.  Spencer cohomology calculations in five
\cite{Beckett:2021cwx} and six \cite{deMedeiros:2018ooy} dimensions
have resulted in a list of supersymmetric backgrounds with eight real
supercharges and, intriguingly, not all of them seem to be related to
supergravity.

In this paper we concentrate on the five-dimensional supersymmetric
backgrounds described in \cite{Beckett:2021cwx}.  In that paper,
following from the calculation of the generalised Spencer cohomology
of the minimal five-dimensional Poincar\'e superalgebra, two families of
maximally supersymmetric backgrounds were obtained.  One family
consists of the maximally supersymmetric backgrounds of minimal
five-dimensional supergravity \cite{Gauntlett:2002nw}, and the other
family consists of five-dimensional lorentzian locally symmetric
spaces $(M,g)$ together with a nontrivial $\sp(1)$-valued vector field
$\Phi =\varphi \otimes r $, with $r$ a fixed element of $\sp(1)$ and
$\varphi \in \mathfrak{X}(M)$ a parallel vector field.  Up to
covering, $(M,g)$ is one of the following (conformally flat) spacetimes
\begin{enumerate} 
\item $-\mathbb{R} \times S^4$ with the lorentzian product metric, if
  $\varphi $ is timelike;
\item $\RR \times \AdS_4$ also with the lorentzian product metric, if
  $\varphi $ is spacelike; and
\item a certain indecomposable Cahen--Wallach (CW) lorentzian symmetric
  space, if $\varphi$ is null.
\end{enumerate}
In the first two cases, the lorentzian norm of $\varphi$ is related to
the curvature of the round metric on $S^4$ and on $\AdS_4$.

The aim of this paper is to classify four-dimensional Kaluza--Klein
reductions of these three backgrounds and in particular those which
preserve some (nonzero) fraction of the supersymmetry. For each space
we will proceed in a similar way: first we classify one-parameter
subgroups of the relevant isometry group which lead to quotients which
are smooth pseudo-riemannian manifolds; that is, either lorentzian or
riemannian. We then find which of these one-parameter subgroups
preserve some fraction of supersymmetry. Finally we discuss the
geometry of these quotients and comment on the results obtained.

The paper is organised as follows.  In Section~\ref{setup} we set out
our conventions, introduce the five-dimensional geometries and discuss
our methodology: how to classify the one-parameter subgroups of
isometries and how to select those which lead to either a lorentzian
or riemannian quotient; how to determine the fraction of supersymmetry
which is preserved by the reduction; and how to work out the quotient
metric and its  isometries.  We then apply this methodology to each of
the three five-dimensional spacetimes in turn.

In Section~\ref{s4} we study the reductions of $-\RR \times S^4$. The
one-parameter subgroups are determined in Proposition~\ref{kvfsn} and
the corresponding fraction of supersymmetry is determined in
Proposition~\ref{psusys4}. Among the reductions, it is worth
mentioning a one-parameter family of $\tfrac12$-BPS quotients (i.e.,
preserving one half of the supersymmetry) which are riemannian
four-dimensional manifolds admitting rigid euclidean supersymmetry
with four real supercharges. These manifolds are diffeomorphic to
$S^4$ but the metric is far from the round metric, being of
cohomogeneity one with $O(3) \times O(2)$ isometry group.

In Section~\ref{ads4} we study the reductions of $\RR \times \AdS_4$.
The one-parameter subgroups are determined in
Proposition~\ref{kvfads} and the corresponding reductions contain both
lorentzian and riemannian four-dimensional geometries.  Those
one-parameter subgroups preserving some supersymmetry are determined
in Proposition~\ref{psusyads4} and listed in Table~\ref{presusyads}.
We see that there are two $\tfrac12$-BPS reductions: one riemannian
and one lorentzian.  The metric of the lorentzian reduction is given
in equation~\eqref{eq:new-lor-4d-metric-ads} and that of the riemannian
reduction by equation~\eqref{eq:new-riem-4d-metric-ads}.  The former
lorentzian metric is not conformally flat and hence it is not one of
the four-dimensional geometries in \cite{deMedeiros:2016srz}.  It is a
novel four-dimensional lorentzian geometry admitting rigid
supersymmetry with four real supercharges.  The latter riemannian
metric gives a four-dimensional riemannian geometry admitting rigid
euclidean supersymmetry with four real supercharges.  The (hereditary) isometry Lie
algebras of the quotients are listed in Table~\ref{adsgeneratorslie}.
We observe that, in particular, the isometry Lie algebra of the
lorentzian metric in equation~\eqref{eq:new-lor-4d-metric-ads} is
isomorphic to $\so(2) \oplus \so(2,1)$, while that of the riemannian
metric in equation~\eqref{eq:new-riem-4d-metric-ads} is isomorphic to
$\so(2) \oplus \so(3)$.  Neither metric is homogeneous, but of
cohomogeneity one.

In Section~\ref{cw} we study the reductions of the Cahen--Wallach
spacetime. The possible Killing vector fields are determined in
Proposition~\ref{kvfcw}. All the resulting quotients are lorentzian.
Those quotients preserving some supersymmetry are determined in
Proposition~\ref{psusycw} and listed, along with the corresponding
fraction of supersymmetry, in Table~\ref{frusycw}. We see that there
are three families of reductions preserving half the supersymmetry.
Upon determining the quotient metric, we see that two of them
(labelled $X_6$ and $X_8^\pm$) are isometric to the bi-invariant metric on
the Nappi--Witten group, hence they agree with one of the geometries
in \cite{deMedeiros:2016srz}. The third family of reductions (labelled
$X_9$) contains at least a one-parameter family of novel four-dimensional lorentzian
geometry admitting rigid $N=1$ supersymmetry. The (hereditary) isometry Lie
algebras of the quotients are listed in Table~\ref{geoquotcw}.

Finally in Section~\ref{sec:conc-summ} we summarise the results and
offer some conclusions.  The paper contains two appendices:
Appendix~\ref{cwgener} on the geometry of the Cahen--Wallach spaces
and Appendix~\ref{gmatr} on the explicit form of the gamma matrices
used to check the fraction of supersymmetry in the quotients.

\section{Setup}
\label{setup} 
\subsection{Conventions}

We will work with metrics $g$ of mostly plus signature. Accordingly we
call a nonzero vector $v$ timelike if $g (v,v )<0$, spacelike if
$g(v,v)>0$ and null if $g (v,v )=0$.  We denote by $- \mathbb{R}$
the space $\mathbb{R} $ with negative-definite Euclidean inner
product, by $\mathbb{R}^{p,q}$ the space $\mathbb{R}^{p+q}$
with flat pseudo-inner product of signature $(p,q) $, where $p$ is the
number of negative eigenvalues, and by $\eta^{p,q}$, or just $\eta$
if the signature is clear from the context, the corresponding inner
product. The isometry group of a pseudo-riemannian manifold $M$ is
denoted by $ \operatorname{Iso} (M)$. Einstein's summation convention
is enforced throughout.

Let $V$ be a finite dimensional vector space $V$ equipped with a
non-degenerate inner product $g$. The Clifford algebra $\Cl (V) $
associated to $(V,g) $ is defined by the relation
\begin{equation} 
\label{clifrel} 
u v + vu = - 2 g (u,v ) I 
\end{equation} 
for any $u,v \in V $.  For $ (p,q )=(1,4 ) $,
$ \Cl (V) \simeq \End(\Sigma) \oplus \End(\Sigma^\prime)$ where $\Sigma$,
$\Sigma^\prime$ are irreducible inequivalent $\Cl (V) $ modules, each
of which is isomorphic to $\mathbb{H}^2 $ as a (right) vector
space. They are distinguished by the action of the volume element
which is by $I$ on $\Sigma$ and $- I $ on $\Sigma^\prime $.  We will
work with $\Sigma$ from now on.  The Clifford algebra $\Cl (V) $ is
generated by an orthonormal basis $(e_i )$ of $V$, and we denote by
$\gamma_i $ the image of $e_i $ under the representation
$\Cl (V) \rightarrow \operatorname{End} \Sigma $.  The gamma matrices
obey the relation
\begin{equation}
\gamma_i \gamma_j + \gamma_j \gamma_i =-2 \eta^{ 1,4 }_{ ij }.
\end{equation} 
We also define
\begin{equation}
  \gamma_{ ij } = \frac{1}{2} (\gamma_i  \gamma_j - \gamma_j \gamma_i ).
\end{equation}

The Riemann tensor $R$ of a pseudo-riemannian manifold $(M,g) $ is
\begin{equation}
R(X,Y) Z = (\nabla_X \nabla_Y - \nabla_Y \nabla_X - \nabla_{ [ X , Y ]} ) Z,
\end{equation} 
with components with respect to a local frame $\{ e_i \} $ and dual coframe $\{ e^i \} $
\begin{equation}
R^i_{  \phantom{ i } jkl } = e^i ( R ( e_k , e_l )  e_j).
\end{equation} 
The associated Ricci tensor $ \mathrm{Ric}$ and scalar curvature $s$ are
\begin{equation}
\mathrm{Ric} _{ ij } = R^k_{ \phantom{ k } i k j }, \quad 
s =\mathrm{Ric}^i_{ \phantom{ i }i }.
\end{equation} 
We also define
\begin{equation}
R_{ i jkl } = g_{ i m } R^m_{ \phantom{ m }jkl }.
\end{equation} 

\subsection{Maximally supersymmetric five-dimensional backgrounds}

As mentioned in the Introduction, we are going to consider
four-dimensional Kaluza--Klein (KK) reductions of maximally
supersymmetric backgrounds associated to filtered deformations of
graded subalgebras $\g = \g_{ 0} \oplus \g_{-1} \oplus \g_{-2}$ of the
minimal five-dimensional Poincar\'e superalgebra with $\dim\g_{-1} = 8$
(and hence with $\dim\g_{-2} = 5$). It is proved in
\cite{Beckett:2021cwx} that these are either the usual maximally
supersymmetric backgrounds of minimal supergravity, or locally
symmetric spaces $(M,g) $ with Riemann tensor
\begin{equation}
\label{riemtens} 
R_{ i  j k l } 
= - \tfrac{1}{2} \left( 
g_{i  l  }\varphi_j  \varphi_k- g_{i  k }\varphi_j  \varphi_l 
+  g_{j k }\varphi_i   \varphi_l - g_{j l  }\varphi_i   \varphi_k
   +  |\varphi |^2  \left( g_{ i  k } g_{ j l } -g_{ i  l }g_{ j k } 
 \right)
   \right),
\end{equation} 
where $\varphi\in\mathfrak{X}(M)$ is a certain parallel vector field.
The corresponding Ricci tensor and scalar curvature are
\begin{equation}
\mathrm{Ric} _{ij }=\tfrac{3}{2} ( \varphi_i \varphi_j - |\varphi |^2 g_{ ij } ), \quad 
s =-6 |\varphi |^2 .
\end{equation} 
It follows that, up to covering, $(M,g) $ is one of of three possible
spaces depending on the causal character of $\varphi $.

If $\varphi $ is timelike then  $ M =-\mathbb{R} \times S^4$ with the lorentzian product metric. 
 We take global coordinates $(x^0, x^1 , x^2 , x^3 , x^4  , x^5
 ) $ on $  \mathbb{R} ^{1, 5}  $ and embed $S^4  \subset \mathbb{R}
^5 $ in the usual way, $S^4 = \{ (x^1 , \ldots , x^5 ): x_1^2 +
 \cdots + x_5^2 =R^2  \} $.
In this case
\begin{equation}
\varphi =c \partial_0 .
\end{equation} 
 By (\ref{riemtens}), the radius $R$  of $S^4 $ is related to $|\varphi |$ by the equation
 \begin{equation}
R^2 |\varphi |^2 =-2.
 \end{equation} 

If $\varphi$ is spacelike then  $M =\mathbb{R}  \times \mathrm{AdS}_4 $ where $\mathrm{AdS}_4 $ is 4-dimensional anti de Sitter space.
Taking global coordinates $(y , x^1 , x^2 , x^3 , x^4 ,  x^5 ) \in \mathbb{R} \times \mathbb{R} ^{ 2,3 } $,  with $x^1 , x^2 $ timelike and the other coordinates spacelike, we regard $\mathrm{AdS}_4 $ as the universal  cover of the quadric in $\mathbb{R} ^{ 2,3 }$
\begin{equation}
x_1^2 +  x_2^2 - x_3^2 -x_4^2 - x_{5}^2 = R^2 .
\end{equation} 
In this case
\begin{equation}
\varphi =c \partial_y 
\end{equation} 
and by (\ref{riemtens}) the ``radius'' $R$  of $\mathrm{AdS}_4 $ is related to $|\varphi |$ by the equation
 \begin{equation}
R^2   |\varphi |^2 =2.
 \end{equation} 

If $\varphi$ is null then $M $ is an indecomposable Cahen--Wallach (CW) space. Some general facts about Cahen--Wallach spaces are recalled in Appendix \ref{cwgener}.  In five dimensions the geometry is that of  $\mathbb{R} ^5 $ with global coordinates $(x^+ , x^- , x^1 , x^2 , x^3) $ and the lorentzian metric
\begin{equation}
\label{cwgeneral} 
g =2 \mathrm{d} x^+ \mathrm{d} x^-  + \sum_{ i , j =1 }^3 A_{ ij } x^i x^j   (\mathrm{d} x^- )^2  +  | \mathrm{d} x |^2 ,
\end{equation} 
where $|\mathrm{d} x |^2 = \mathrm{d} x_1^2 + \mathrm{d} x_2^2 + \mathrm{d} x_3^2 $ and $A$ is a bilinear symmetric form on $\mathbb{R} ^3 $.  The only non-vanishing components of the Riemann and Ricci tensors associated to (\ref{cwgeneral}) are
\begin{equation}
\label{riemw} 
R_{ -i-j }=-  A_{ ij }, \quad 
 \operatorname{Ric} _{ -- } = - \operatorname{Tr} A .
\end{equation} 
In this case
\begin{equation}
\varphi = c \partial_+ .
\end{equation} 
Comparing (\ref{riemw}) with (\ref{riemtens}) we  find  that for the CW spaces arising in \cite{Beckett:2021cwx}  $A$ is a scalar matrix,
\begin{equation}
A  = -\frac{c^2 }{2} \eta^{ 0,3 }.
\end{equation} 
For such a choice of $A$ the metric (\ref{cwgeneral}) is conformally flat, see (\ref{weylcw}).
From now on  we take $ c =\sqrt{ 2 } $ so that
 \begin{equation}
g =2 \mathrm{d} x^+ \mathrm{d} x^-  -  |x |^2    (\mathrm{d} x^- )^2  +  | \mathrm{d} x |^2 ,
\end{equation} 
where $|x |^2 =x_1^2 + x_2^2 + x_3^3 $.

\subsection{One parameters subgroups}

Let $(M,g) $ be a pseudo-riemannian manifold,
$G= \operatorname{Iso} (M)$. For $\Gamma \subset G $ a suitable
one-parameter subgroup of $G$, the Kaluza--Klein (KK) reduction of
$(M,g) $ along $\Gamma$ is the space of orbits $M/ \Gamma $ with the
induced metric. In order for $M / \Gamma $ to have a well defined
pseudo-riemannian metric we require $\Gamma$ to be generated by a
Killing Vector Field (KVF) $\xi$ which is either spacelike or timelike, so
$|\xi |\neq 0 $. The quotient by a null Killing vector field, leading
to Newton-Cartan geometries, is also interesting but we defer it to
future work.

The condition $|\xi |\neq 0 $ automatically ensures that the action
generated by $\xi$ is free. In order to ensure that $M$ is a smooth
manifold we also want $\Gamma$ to act properly.  In indefinite
signature determining if the action of a subgroup of $G$ is proper is
a non-trivial problem which we will not address in this paper, but
simply assume that parameters are chosen in such a way that the action
is proper.

Let $\Gamma $ be a one-parameter subgroup of $G$, $g \in G $ an
isometry. In classifying KK reductions we want to identify the
isometric quotients
\begin{equation}
\label{isoquotients}
M /  \Gamma  \simeq  (g \cdot M) / (g\Gamma g ^{-1}).
\end{equation}
The $G$ action on $M$ induces a Lie algebra homomorphism $  \mathfrak{g} \rightarrow \mathfrak{X} (M)   $ taking values in the space of Killing vector fields, given by 
\begin{equation}
\label{liegkvf} 
X \mapsto  \xi_X , \quad  \xi_X |_p =\ddt \exp ( - t X ) \cdot p ,
\end{equation}
where we act with $- X $ rather than $X$ in order to obtain a Lie algebra homomorphism rather than anti-homomorphism. It follows that  if $ \xi_X $ is a Killing vector field and $\Gamma=\Gamma_{\xi_X} $  the corresponding one-parameter subgroup of $G$ then
\begin{equation}
g\Gamma_{\xi _X}g^{-1}=\Gamma_{\xi_{\operatorname{Ad}_gX}}.
\end{equation}
Hence for any $g \in G $, $X \in \mathfrak{g}  $, for $\Gamma=\Gamma_{\xi_X}$ (\ref{isoquotients}) is equivalent to
\begin{equation}
M /  \Gamma_{\xi_X}  \simeq  (g \cdot M) / \Gamma_{\xi_{\operatorname{Ad}_gX}}.
\end{equation}
Moreover for any  $\lambda  \in \mathbb{R} ^\times $, $X\in \mathfrak{g}  $ and $\lambda  X $ generate the same one-parameter subgroup. Therefore classifying one-parameter subgroups of $G$ is equivalent to classifying 1-dimensional Lie subalgebras  of $\mathfrak{g}  $  under the equivalence relation 
\begin{equation}
 X \sim \lambda \, g X g^{-1} ,
\end{equation} 
for any $ \lambda \in \mathbb{R} ^\times $, $g \in G $.
We also recall that, for any $g\in G$, $X\in \mathfrak{g}$,  (\ref{liegkvf})  satisfies
\begin{equation}
\label{liekv} 
\xi_{ \operatorname{Ad}_g X  }  = g_\ast  \, \xi_X ,
\end{equation} 
where $g_\ast $ is the induced $G$-action on $T M $.
Hence conjugation  of $X$ by $g$ corresponds to left translation  by an isometry of the associated KVF.

\subsection{Preserved SUSY}

As discussed in the Introduction, for the purposes of this paper a Killing spinor is a solution of the parallelism condition $\mathcal{D} \epsilon =0$ for $\mathcal{D}$ a certain connection on the spinor bundle which in the present case is \cite{Beckett:2021cwx}
\begin{equation}
\mathcal{D} = \nabla - \beta ,
\end{equation}
where $\nabla $ is (the lift to the spin bundle of) the Levi-Civita connection and, for $ \xi $ a Killing vector field, $\epsilon$ a spinor field, $\beta \in \Gamma  (T^\ast M  \otimes \operatorname{End} ( \Sigma  ) )$ is given by
\begin{equation} 
\label{xicontr} 
\beta (\xi  ) (\epsilon ) =: \beta_\xi  \epsilon    = \frac{1}{4} (\xi  \cdot \varphi \cdot + 3 g (\xi , \varphi ) I ) r \epsilon  
 = \frac{1}{4}  \xi ^i\varphi^j (\gamma_i \gamma_j + 3g_{ i j } ) r \epsilon.
\end{equation} 
Here $r$ is some fixed non-zero element of $\mathfrak{su }(2) $,
\begin{equation}
r =\begin{pmatrix}
i |r^1_{ \ 1 } |& r^1_{ \ 2 } \\
- \overline{r^1_{ \ 2 } } & - i |r^1_{ \ 1 } |
\end{pmatrix} 
\end{equation}  
which we are free to rescale  as well to conjugate within $SU (2) $. Hence from now on we take
\begin{equation}
\label{rchoice} 
r =\begin{pmatrix}
i & 0 \\
0 & -i
\end{pmatrix} .
\end{equation} 

A given bosonic background $M$ is supersymmetric if
$\nu = \operatorname{dim} (S) / n >0$ for $S$ the space of Killing
spinors on $M$ and $n$ the rank of the spinor bundle. If $\nu =1 $
then $M$ is said to be maximally supersymmetric. As discussed in
\cite{Figueroa-OFarrill:2004lpm}, if $M$ is simply connected spin,
$\xi$ a KVF, $\Gamma_\xi $ the corresponding one-parameter group, then
$M / \Gamma_\xi $ is a spin manifold if and only if and only if the
$\Gamma_\xi $-action on $M$ lifts to an action on the spin bundle of
$M$ in a $\Gamma_\xi $-equivariant way.\footnote{In
  \cite{Figueroa-OFarrill:2004lpm} only the case $\xi $
  spacelike is considered, but the argument extends directly to any
  KVF $\xi$ which never vanishes.}  This always happens if $\Gamma_\xi
$ has the topology of a line, which is the case for all the examples
we will consider.

In standard supergravity, the KK reduction $M / \Gamma_\xi $ of a
$5d$ solution gives a solution of the $4d$ supergravity equations
provided that $\xi$ preserves any other field which is turned on in
five dimensions: e.g., for minimal supergravity, $\xi$ should also
preserve the $2$-form.  A $4d$ Killing spinor lifts to a $5d$ Killing
spinor which is invariant under $\xi$.  Conversely, any $5d$ Killing
spinor which is invariant under $\xi$, descends to a Killing spinor on
$M / \Gamma_\xi $.  In our case we prefer not to specify a $4d$
theory on the quotient, but in analogy with standard supergravity we
define a Killing spinor on $M /\Gamma_\xi $ to be a Killing spinor on
$M$ which is invariant under $\Gamma_\xi $.

A Killing spinor $\epsilon$ is invariant under $\Gamma_\xi $ if and
only if $L_\xi \epsilon =0 $ for $L_\xi $ the spinorial Lie
derivative. The fraction of SUSY preserved by the quotient
$M / \Gamma_\xi $ is thus
 \begin{equation}
\frac{1}{n} \operatorname{dim}  \left(  \operatorname{Ker} L_\xi |_S  \right).
 \end{equation}  
We recall that the Lie derivative of a spinor $s$  along a KVF $\xi$ can be written
\begin{equation}
L_\xi s = \nabla_\xi s+ \frac{1}{4} \mathrm{d} \xi^\flat \cdot s .
\end{equation} 
For  a Killing spinor $\epsilon$, $\nabla \epsilon = \beta \epsilon $, so $\Gamma_\xi$ preserves a Killing spinor if
\begin{equation}
\label{lxie} 
L_\xi \epsilon =  \beta _ \xi \epsilon + \frac{1}{4} \mathrm{d} \xi^\flat \cdot \epsilon =0.
\end{equation} 
  Since $\xi$ preserves all the bosonic fields which are
  turned on, $L_\xi \circ\mathcal{D} =0$, hence $\xi$ preserves
  the spinor connection and $[L_\xi, \mathcal{D}_X] = \mathcal{D}_{[\xi,X]}$. Therefore the Lie derivative of a Killing spinor is a
Killing spinor. Being defined as parallel sections, Killing spinors (on
a connected manifold) are fully determined by their value at one
point, hence (\ref{lxie}) only needs to be checked at an arbitrarily
selected point, which we can choose so to simplify computations.

Writing a spinor $\epsilon$ as a pair $(\epsilon_1 , \epsilon_2 )^T $, with $\epsilon_i \in \mathbb{H}  \simeq \mathbb{R} ^4 $ acted upon by the gamma matrices $ \gamma_\mu \in \Cl (1,4) $, and with the choice (\ref{rchoice}) for $r$, (\ref{lxie}) decouples into the two matrix equations
\begin{equation}
\begin{split} 
\left[\frac{1}{8}(\mathrm{d} \xi^\flat )_{ \mu \nu}\gamma^{\mu \nu } +  \frac{ i}{4} \xi ^\mu\varphi^\nu (\gamma_\mu \gamma_\nu + 3g_{ \mu \nu } )\right] \epsilon_{1} &=0,\\
\left[\frac{1}{8}(\mathrm{d} \xi^\flat )_{ \mu \nu}\gamma^{\mu \nu } -\frac{ i}{4} \xi ^\mu\varphi^\nu (\gamma_\mu \gamma_\nu + 3g_{ \mu \nu } )\right] \epsilon_{2}&=0 .
\end{split} 
\end{equation}

\subsection{Geometry of the quotient}
\label{gqgd}

We first recall the geometry of a KK quotient.
Let $(M,g) $ be a pseudo-riemannian $5$-manifold  with a connected 1-dimensional Lie group $\Gamma $ acting smoothly and properly by isometries. Let  $\xi  $ be the KVF generating the $\Gamma$  action. Provided that $|\xi  |\neq 0 $ everywhere, the quotient $M/\Gamma  $ is a smooth pseudo-riemannian $4$-manifold and 
\begin{equation} 
\pi : M \rightarrow M /\Gamma  
\end{equation}
 a principal $\Gamma $-bundle. 
The metrics $h$ on $ M / \Gamma  $ and $g$ on $M$ are related by
\begin{equation}
\label{kkm} 
g = \pi^\ast h  +  \frac{ \xi ^\flat \otimes \xi ^\flat }{g (\xi  ,\xi  ) },
\end{equation} 
where $ \xi ^\flat =g (\xi  , \cdot )$.
In adapted local coordinates with $\xi  =\partial / \partial z $ we have
\begin{equation}
  \xi ^\flat = \pm \mathrm{e}^{ 2 \tilde\phi } ( \mathrm{d} z + A ),
\end{equation} 
with the top (respectively bottom) sign if $\xi $ is spacelike (timelike), $\tilde\phi \in C^\infty (M) $ and $A \in \Omega^1 (M )$ such that $\xi  (\tilde\phi )=0 $,  $i_\xi  A =0 =i_\xi  \mathrm{d} A $.
It follows that $\tilde\phi = \pi^\ast \phi$ for some $\phi \in
C^\infty(M/\Gamma)$ and that $dA = \pi^\ast F$ for some $F \in
\Omega^2(M/\Gamma)$.  In summary, (\ref{kkm}) can be  written locally as
\begin{equation}
\label{kkquotmetric} 
g = \pi^\ast h  \pm  \mathrm{e}^{ 2 \pi^\ast  \phi }(\mathrm{d} z + A )^2.
\end{equation} 

In all the cases that we are going to study $M$ is a trivial
principal bundle. In fact the KVFs considered in Sections \ref{s4},
\ref{ads4}, \ref{cw}, see Propositions \ref{kvfsn}, \ref{kvfads},
\ref{kvfcw}, all generate groups $\Gamma$ having the topology of a
line, and principal $\RR$-bundles over paracompact bases are always
trivial, see e.g. \cite[Proposition~I.32]{MR2000747}.

By making a suitable choice of coordinates, for example adapted to the
KVF $\xi$, we can explicitly exhibit the metric on $M / \Gamma $,
whose isometry group can then be easily determined, at least in some
cases.  We also recall that if $G$ acts effectively on $M$ by
isometries and $\Gamma$, $H $ are subgroups of $G$, then the $H$
action descends to $M /\Gamma$ if and only if $H$ is a subgroup of the
normaliser $N_\Gamma (G) $.  Any isometry of $M$ descending to $M/
\Gamma $ is an isometry of $M / \Gamma $. Of course $\Gamma$ itself
acts trivially on $M / \Gamma $, so we need to quotient by it to
obtain an effective action,
\begin{equation}
\label{isoquotform} 
N_\Gamma (G)/ \Gamma \subset\operatorname{Iso} ( M / \Gamma ).
\end{equation}
We call the Lie algebra of $ N_\Gamma (G)$ the hereditary isometry
algebra of the quotient and denote it by $\mathfrak{l }$. In general
it is a proper subalgebra as $ M / \Gamma $ may have additional
``accidental'' isometries.

\newpage 
\section{Kaluza-Klein quotients of $- \mathbb{R}  \times S^4 $}
\label{s4}

Recall that we take global coordinates $(x^0, x^1 , x^2 , x^3 , x^4  , x^5 ) $ on $  \mathbb{R} ^{1, 5}  $ with $S^4 = \{ (x^1 , \ldots , x^5 ): x_1^2 + \cdots + x_5^2 =R^2  \} $.  The parallel vector field $\varphi$ is 
\begin{equation}
\varphi =c \partial_0 ,
\end{equation} 
with 
 \begin{equation}
R^2 |\varphi |^2 = - R^2 c^2 =-2.
 \end{equation}

\subsection{One parameters subgroups of $G$}
\label{onepars4} 
The isometry group of $M =- \mathbb{R}  \times S^4 $ is the direct product $ G = \mathbb{R}  \times O (5) $ of time translations and rotations, with Lie algebra  $\mathfrak{g}  = \mathbb{R}  \oplus  \mathfrak{so }(5)$. Let $X = (\tau^0 , - \omega) \in \mathfrak{g}  $. Here  
\begin{equation}
 \omega = \omega^i_{ \ j } e_i \otimes e^j 
 \end{equation}
  is a skew-adjoint endomorphism, $\omega_{ ij }:=g_{ i a }\omega^a_{  \ j } =- g_{ ja }\omega^a_{ \ i } = - \omega_{ ji }$. The vector field $\xi  $ associated to $X$ by (\ref{liegkvf}) is, taking into account that the action on coordinates is the inverse of that on points,
\begin{equation}
\label{xitoxxxxx} 
\begin{split}
\xi_p &
  = \ddt x^\mu ( \exp (-t X ) \cdot p ) \partial_\mu 
    = \ddt (\exp (t X ) )^\mu_{ \ \nu }x^\nu  ( p ) \partial_\mu \\ &
= \ddt \Big( (x^0 (p)  +  \mathrm{e}^{t  \tau^0} )\partial_0  +  (\mathrm{e}^{ -t \omega } )^i_{ \ j } x^j(p)   \partial_i\Big )
= \tau^0  \partial_0 -\omega^i_{ \ j } x^j (p) \partial_i \\ &
= \tau^0\partial_0  + \frac{1}{2}  \omega ^{ i j } ( x_i (p) \partial_j - x_j (p) \partial_i ).
\end{split}
\end{equation}
Identifying a point with its coordinates, from now on we simply write
\begin{equation}
\xi =\tau + \lambda , \quad   \tau = \tau^0 \partial_0,\quad  
 \lambda = \tfrac{1}{2} \omega ^{ ij } (x_i \partial _j - x_j \partial_i ).
\end{equation}
Therefore we have the correspondence
\begin{equation}
\label{lieveccorr} 
(\tau ^0 , - \omega ) \in \mathfrak{g}  \quad  \longleftrightarrow \quad   \tau^0 \partial_0 + \frac{1}{2} \omega^{ ij } R_{ ij } \in \mathfrak{X}  (M) ,
\end{equation} 
where\begin{equation}
\begin{split} 
R_{ ij } &= x_i \partial_j -x_j  \partial_i 
 \end{split}
 \end{equation} 
is the generator of a  rotation in the plane $(i,j) $. Its  squared norm with respect to $\eta^{ 1,5 }$ is
\begin{equation} 
\label{rotnorm} 
|R_{ ij } |^2 =x_i^2 + x_j^2 .
\end{equation} 
In particular, let $(e_i )  $, $(e^i )$ be bases of $\mathbb{R}  ^5 $ and of its dual.  Taking generators
\begin{equation}
\label{genso23} 
\epsilon_{ ij } = e_j \otimes e^i  - e_i \otimes e^j , \quad i, j =1, \ldots ,5,
\end{equation} 
of $\mathfrak{so }(5) $, with bracket
\begin{equation}
[\epsilon_{ ij }, \epsilon_{ kl }]= 
\eta^{ 0,5 }_{ ik } \epsilon_{ jl }+\eta^{ 0,5 }_{ jl } \epsilon_{ ik } -\eta^{ 0,5 }_{ il } \epsilon_{ jk } - \eta^{ 0,5 }_{ jk }\epsilon_{ il },
\end{equation}  
we have
\begin{equation}
\label{partcorrrlkj} 
\epsilon_{ ij }\in \mathfrak{so }(5)  \quad \longleftrightarrow \quad R_{ ij } \in \mathfrak{X}  (S^4).
\end{equation} 

\begin{proposition} 
\label{kvfsn} 
Let $\xi$ be a KVF of $M = - \mathbb{R}  \times S  ^{4} $ and assume that $|\xi | $ never vanishes. Then $\xi$ is timelike and there are coordinates such that, up to rescaling,
\begin{equation}\label{kvfs4main} 
  \xi =\partial_0 + \lambda , \quad \lambda  =  \beta_1 R_{ 12 }   + \beta_2  R_{ 34 },
\end{equation} 
with $\beta_1 $, $\beta_2 $ satisfying
\begin{equation}\label{sn23445} 
  R |\beta _1| < 1 \qquad\text{and}\qquad   R |\beta_2| < 1.
\end{equation}
\end{proposition}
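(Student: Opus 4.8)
The plan is to exploit the conjugation-and-rescaling equivalence on $\mathfrak{g} = \RR \oplus \so(5)$ to bring the generator $X = (\tau^0, -\omega)$ to a normal form, and then to read off the constraints from the requirement $|\xi| \neq 0$. Writing $g = (s,h) \in \RR \times O(5)$, conjugation acts trivially on the central $\RR$ factor and by $\omega \mapsto h\omega h^{-1}$ on the $\so(5)$ part, so $\tau^0$ is a conjugation invariant while $\omega$ may be replaced by any orthogonally conjugate matrix. First I would invoke the spectral normal form for real skew-symmetric matrices: every $\omega \in \so(5)$ is $O(5)$-conjugate to a block-diagonal matrix with two $2\times 2$ skew blocks carrying entries $\beta_1, \beta_2$ and a vanishing fifth row and column. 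Under the correspondence \eqref{lieveccorr}--\eqref{partcorrrlkj} this is precisely $\lambda = \beta_1 R_{12} + \beta_2 R_{34}$, which establishes the claimed form $\xi = \tau^0\partial_0 + \lambda$ in suitably rotated coordinates.

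Next I would compute $|\xi|^2$. Since $\partial_0$ spans the timelike $-\RR$ factor and $R_{12}, R_{34}$ are tangent to $S^4$, and since $R_{12}$ and $R_{34}$ act on disjoint coordinate pairs and are therefore mutually orthogonal and orthogonal to $\partial_0$, formula \eqref{rotnorm} gives
\begin{equation}
  |\xi|^2 = -(\tau^0)^2 + \beta_1^2(x_1^2 + x_2^2) + \beta_2^2(x_3^2 + x_4^2).
\end{equation}
I would then analyse this expression subject to the constraint $x_1^2 + \cdots + x_5^2 = R^2$. The spacelike contribution vanishes at the pole $x_5 = \pm R$ and attains its maximum $\max(\beta_1^2, \beta_2^2)\,R^2$ when all of the radius is concentrated in the coordinate plane carrying the larger $|\beta_a|$; hence $|\xi|^2$ ranges over the interval $[\,-(\tau^0)^2,\; -(\tau^0)^2 + \max(\beta_1^2,\beta_2^2)R^2\,]$.

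From here the conclusions follow. If $\tau^0 = 0$ then $\xi$ vanishes at the pole, contradicting $|\xi| \neq 0$ (unless $\lambda = 0$, forcing $\xi = 0$); so $\tau^0 \neq 0$ and $|\xi|^2 = -(\tau^0)^2 < 0$ at the pole. Since $M$ is connected and $|\xi|^2$ is continuous and nowhere zero, it must be negative everywhere, so $\xi$ is timelike; this in turn forces the upper end of the interval to be negative, i.e.\ $\max(\beta_1^2,\beta_2^2)R^2 < (\tau^0)^2$. Finally, rescaling $X$ by $1/\tau^0$, which is permissible since generators are classified only up to $\RR^\times$, normalises $\tau^0 = 1$ and turns the two inequalities $\beta_a^2 R^2 < (\tau^0)^2$ into $R|\beta_1| < 1$ and $R|\beta_2| < 1$, which is exactly \eqref{sn23445}.

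I expect the only genuine subtlety to be the constrained optimisation identifying the extreme values of $|\xi|^2$ on the sphere --- in particular checking that the infimum is attained at the pole with the spacelike part equal to zero and that the supremum equals $\max(\beta_1^2,\beta_2^2)R^2$ --- together with the bookkeeping that conjugation fixes $\tau^0$ and acts only on $\omega$. The normal-form reduction is standard linear algebra, and the passage from "negative at one point" to "negative everywhere" is just continuity on a connected manifold.
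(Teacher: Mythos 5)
Your proposal is correct and follows essentially the same route as the paper's proof: skew-symmetric normal form for $\omega$ under $O(5)$-conjugation, the observation that the rotational part vanishes somewhere on $S^4$ (forcing $\tau^0 \neq 0$), the computation $|\xi|^2 = -(\tau^0)^2 + \beta_1^2(x_1^2+x_2^2) + \beta_2^2(x_3^2+x_4^2)$ bounded by $\max\{\beta_1^2,\beta_2^2\}R^2$ on the sphere, and the resulting constraint \eqref{sn23445}. The only cosmetic differences are that the paper rescales $\tau^0 = 1$ at the start rather than at the end, and justifies $\tau \neq 0$ via fixed points of rotations rather than via your (equally valid) evaluation at the pole of the normal form.
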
 

\begin{proof} 
Since rotations fix the origin of $\mathbb{R} ^{ 1,5 }$, in order for $\xi $ not to have zeros we need $\tau \neq 0$ and we can always rescale  $\xi $ so that $\tau =\partial_0 $. As it is well known, by conjugating via a rotation $\lambda$   can be brought to normal form
\begin{align} 
\label{fixsn} 
\lambda  &= 
 \beta_1 R_{ 12 }   + \beta_2  R_{ 34 },
\end{align} 
so that $|\xi |^2 = -1 + |\lambda |^2  = -1 + \beta_1^2 (x_1^2 + x_2^2 ) + \beta_2^2 (x_3^2 + x_4^2 )$. Restricting to $S^4 $  we have
\begin{equation*} 
0 \leq |\lambda |^2 \leq \operatorname{max} \{ \beta_1^2R^2, \beta_2^2 R^2   \}.
\end{equation*} 
Hence in order to avoid zeros of $|\xi  |$  we need  $ \operatorname{max} \{ \beta_1^2 R^2 , \beta_2^2 R^2   \}  < 1  $ and $\xi $ is timelike.
 \end{proof} 
 Note that since (\ref{kvfs4main}) always has a non-trivial translation part, the associated one-parameter group is non-compact and has the topology of a line.
 
\subsection{Preserved SUSY}
\begin{proposition}\label{psusys4}
  The KVF 
  \begin{equation}
    \xi =\partial_0 +  \beta_1 R_{ 12 }   + \beta_2  R_{ 34 }
  \end{equation} 
  of Proposition \ref{kvfsn} preserves some SUSY if and only if 
  \begin{equation}
    (\beta_1 \pm  \beta_2 )^2 =- |\varphi |^2 .
  \end{equation}
  The fraction $\nu$  of  preserved SUSY  is 
  \begin{equation*}
    \nu =\begin{cases} 
      \frac{1}{2} &\text{if $\beta_2 \beta_1 =0 $},\\
      \frac{1}{4} &\text{otherwise} .
    \end{cases} 
  \end{equation*} 
\end{proposition}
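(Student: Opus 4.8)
The plan is to use the observation, made just after~\eqref{lxie}, that a Killing spinor is determined by its value at a single point, so the two decoupled matrix equations only have to be imposed at one conveniently chosen point. I would evaluate everything at the ``north pole'' $p=(x^0,0,0,0,0,R)$, where $\partial_0,\partial_1,\partial_2,\partial_3,\partial_4$ is an orthonormal frame (with $\partial_0$ timelike) realising the $\Cl(1,4)$ action. First I would compute
\[
\xi^\flat = -\mathrm{d}x^0 + \beta_1(x_1\,\mathrm{d}x^2 - x_2\,\mathrm{d}x^1) + \beta_2(x_3\,\mathrm{d}x^4 - x_4\,\mathrm{d}x^3),
\]
so that the \emph{constant} two-form $\mathrm{d}\xi^\flat = 2\beta_1\,\mathrm{d}x^1\wedge\mathrm{d}x^2 + 2\beta_2\,\mathrm{d}x^3\wedge\mathrm{d}x^4$ gives $\tfrac18(\mathrm{d}\xi^\flat)_{\mu\nu}\gamma^{\mu\nu}=\tfrac12(\beta_1\gamma_{12}+\beta_2\gamma_{34})$. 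At $p$ one has $\xi=\partial_0$ and $\varphi=c\,\partial_0$ with $c^2=-|\varphi|^2$, and since $\gamma_0^2=I$ while $g_{00}=-1$, the term $\xi^\mu\varphi^\nu(\gamma_\mu\gamma_\nu+3g_{\mu\nu})=c(\gamma_0^2+3g_{00})=-2c\,I$ is a multiple of the identity. The decoupled system therefore collapses to the purely algebraic conditions
\[
(\beta_1\gamma_{12} + \beta_2\gamma_{34} \mp ic\,I)\,\epsilon_{1,2} = 0 .
\]

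Next I would analyse this pair of operators. The endomorphisms $\gamma_{12}$ and $\gamma_{34}$ commute (they involve disjoint indices) and each squares to $-I$, while the imaginary unit entering through the choice~\eqref{rchoice} of $r$ acts as an R-symmetry and so commutes with the entire Clifford action, in particular with $\gamma_{12}$ and $\gamma_{34}$. I can therefore simultaneously diagonalise $\gamma_{12}$, $\gamma_{34}$ and this complex structure; the eigenvalues of $\beta_1\gamma_{12}+\beta_2\gamma_{34}$ are $i(\pm\beta_1\pm\beta_2)$. A nonzero $\epsilon_1$ exists precisely when $ic$ is one of these, and a nonzero $\epsilon_2$ precisely when $-ic$ is; in either case this forces $c^2=(\beta_1+\beta_2)^2$ or $c^2=(\beta_1-\beta_2)^2$, i.e. $(\beta_1\pm\beta_2)^2=-|\varphi|^2$, which is the stated condition.

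Finally, to pin down the fraction I would count the dimension of the solution space using the explicit gamma matrices of Appendix~\ref{gmatr}. The structural input is that each joint eigenspace of $(\gamma_{12},\gamma_{34})$ is one complex dimension (two real dimensions), and that these four eigenspaces are distributed between the two $r$-eigenspaces carrying $\epsilon_1$ and $\epsilon_2$ according to the sign of the four-plane chirality $\gamma_{12}\gamma_{34}$. When $\beta_1\beta_2\neq 0$ the four values $\pm\beta_1\pm\beta_2$ are distinct, so the resonance $c=\beta_1+\beta_2$ (say) is met by a single eigenspace sitting on only one side, giving a two-real-dimensional kernel and hence $\nu=2/8=\tfrac12\cdot\tfrac12=\tfrac14$. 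When instead $\beta_2=0$ (or $\beta_1=0$) the operator reduces to $\beta_1\gamma_{12}$, whose eigenvalues $\pm i\beta_1=\pm ic$ are matched on the $\epsilon_1$- and $\epsilon_2$-side respectively; both equations then acquire solutions, doubling the kernel to four real dimensions and giving $\nu=\tfrac12$.

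The main obstacle is exactly this last step: verifying that each joint $(\gamma_{12},\gamma_{34})$-eigenspace is one-complex-dimensional and determining which eigencombinations lie in the $+i$- versus the $-i$-eigenspace of $r$. This distribution cannot be read off from the commutation relations alone and requires the explicit matrices of Appendix~\ref{gmatr}; it is precisely this bookkeeping that distinguishes the $\tfrac14$ case from the $\tfrac12$ case.
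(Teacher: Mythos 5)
Your strategy is the same as the paper's: you evaluate the invariance condition at a zero of the rotational part $\lambda$ (your north pole is precisely such a zero; the paper invokes the Poincar\'e--Hopf theorem only to know one exists), you arrive at the same decoupled system $(a \mp ic)\epsilon_{1,2}=0$ with $a=\beta_1\gamma_{12}+\beta_2\gamma_{34}$, and you read off the condition $(\beta_1\pm\beta_2)^2=-|\varphi|^2$ from the spectrum $i(\pm\beta_1\pm\beta_2)$ of $a$. Up to that point your argument is correct and matches the paper's proof.

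The problem is in your final counting paragraph, and it is sharper than the ``bookkeeping'' you flag. First, a small slip: the four values $\pm\beta_1\pm\beta_2$ are \emph{not} distinct whenever $\beta_1\beta_2\neq0$ (take $\beta_1=\pm\beta_2$, where two of them collide at $0$); the count survives only because $c\neq 0$, so the eigenvalue matched by $c$ is still simple in that case. Second, and more seriously, your structural mechanism --- that the four joint $(\gamma_{12},\gamma_{34})$-eigenspaces are distributed $2+2$ between the $r=+i$ and $r=-i$ sides according to the chirality $\gamma_{12}\gamma_{34}$, so that generically the resonance is met ``on only one side'' --- is false, and it also contradicts your own second paragraph, where you (correctly) let each of $\epsilon_1,\epsilon_2$ see the full spectrum of $a$. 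Precisely because $r$ commutes with the entire Clifford action (it acts through the quaternionic commutant, equivalently on the symplectic-Majorana doublet index), each $r$-eigenspace is a \emph{full} irreducible Dirac module $\cong\mathbb{C}^4$ on which all four eigenvalues $i(\pm\beta_1\pm\beta_2)$ occur once; this is exactly why the paper's determinant $\det(a\mp ic)$ has four factors rather than two. Moreover, the conjugation fixing the real spinor space commutes with $a$ and exchanges the two eigenspaces of $r$, hence maps $\operatorname{Ker}(a-ic)$ isomorphically onto $\operatorname{Ker}(a+ic)$: solutions always occur in equal dimension on both sides, never on one side only. The correct count is that for $\beta_1\beta_2\neq0$ each kernel is $1$-dimensional and the conjugate pair contributes $2$ real dimensions, so $\nu=2/8=\tfrac14$, while for $\beta_1\beta_2=0$ each kernel is $2$-dimensional, giving $4$ real dimensions and $\nu=\tfrac12$. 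Your totals agree with these only because your halved spectrum is compensated by treating the two sides' solutions as independent; had you carried out the verification against Appendix~\ref{gmatr} that you defer to, you would have found the distribution claim fails and the argument needs this repair.
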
 
\begin{proof} 
By the Poincar\'e--Hopf theorem any vector field on $S^4 $ has a zero,
so let $o$ be a zero of $\lambda =\beta_1 R_{ 12 }+ \beta_2 R_{ 34 }$.
Then $\xi_o =\partial_0 |_o $ and substituting
\begin{equation}
 \varphi = c  \partial_0
 \end{equation}
 in (\ref{xicontr}) and evaluating at $o $ we find 
\begin{equation}
\label{betactions4} 
\beta_\xi \epsilon  |_o
= - \frac{c }{2} r \epsilon .
\end{equation} 
Since $r$ is invertible, if $\lambda =0 $ no SUSY is preserved. Suppose $\lambda \neq 0 $. We calculate
\begin{equation}
\label{dlambdact} 
\frac{1}{4} \mathrm{d} \xi ^\flat \cdot \epsilon 
=\frac{1}{4} \omega_{ ij } \mathrm{d} x^i \wedge \mathrm{d} x^j \cdot \epsilon 
=   \frac{1}{4} \omega_{ ij } \gamma^{ij } \epsilon .
\end{equation} 
Substituting $\omega_{ 12 } = \beta_1 $, $\omega_{ 34 } = \beta_2 $, we get
\begin{equation*}
 \frac{1}{4} \mathrm{d} \xi^\flat \cdot \epsilon  
 =  \frac{1}{2}  (\beta_1 \gamma^{ 12 } + \beta_2 \gamma^{ 34 } ) \epsilon .
\end{equation*} 
For convenience set
\begin{equation}
a =  \beta_1 \gamma^{ 12 } + \beta_2 \gamma^{ 34 },
\end{equation} 
then
\begin{equation}
L_\xi \epsilon 
=  \frac{1}{2} (  a \operatorname{Id} - c \, r  ) \epsilon .
\end{equation} 
Vanishing of the Lie derivative is thus  equivalent to the equations
\begin{equation}
\label{eqoiuerw}
( a - i c   )\epsilon_1=0 = (a + i c ) \epsilon_2 .
 \end{equation} 

One can check, e.g.~by using the explicit representation of the gamma matrices given in Appendix \ref{gmatr}, that  the eigenvalues of $a- ic $ are
\begin{equation}
\label{jklreopi} 
  i ( \pm \beta_1 \pm  \beta_2 -   c ),
\end{equation} 
with all four  possible sign combinations.  Thus
\begin{equation}
D =\det (a - i c )=\det (a +i c )=(\beta_1 + \beta_2 -c) (\beta_1 + \beta_2 + c )(\beta_1 - \beta_2 -c) (\beta_1 - \beta_2 + c ),
\end{equation}
and
\begin{equation} 
\label{condsusys4} 
D =0 \quad \Leftrightarrow \quad (\beta_1 \pm \beta_2 )^2 =  - |\varphi |^2 .
\end{equation} 
If $D =0 $  we see from (\ref{jklreopi}) that if $\beta_1 \beta_2  \neq 0 $  then $a  \pm  ic  $ has a 1-dimensional kernel, while if $\beta_1 \beta_2 =0 $  it has a 2-dimensional kernel. Correspondingly the fraction of preserved SUSY  is $\tfrac{1}{4} $ in the former case and $\tfrac{1}{2}$ in the latter. 
 \end{proof} 

\subsection{Geometry of the quotient}
\label{geomquots4} 
The quotient  $M  / \Gamma  $, where $\Gamma$ is the one-parameter group generated by the KVF $\xi =\partial_0 + \lambda $, is diffeomorphic to $S^4 $. Since $\xi$ is timelike, the induced metric is riemannian. To find an explicit expression for it we first rewrite the Minkowski space metric $\eta$ on $\mathbb{R} ^{ 1,5 } \supset - \mathbb{R}  \times S^4 $ in terms of more convenient coordinates.

Define $U = \exp ( - x^0 \lambda )\in SO(5 )$, so that 
\begin{equation}
\xi (f) =  U \partial_{ 0 } (U^{-1} f ).
\end{equation} 
 Let $y =(y^1 , \ldots  y^5 )$, $x =(x^1 , \ldots , x^5 )$, and set 
 \begin{equation}
 y = U x .
 \end{equation}
  Then  $\xi (y) =0 $, so $(y^1, \ldots , y^5  )$,   are good coordinates on the quotient. Since $\lambda \in \mathfrak{so }(5) $, its action on $\mathbb{R} ^5 $  is linear. Denote by $B$ the matrix representing the action of $\lambda$ with respect to the coordinates  $(x^1, \ldots , x^5 )$,  $\lambda x^i  = B^i_{ \ j } x^j $. Then
\begin{equation}
\mathrm{d} x =\mathrm{e}^{ x^0 B }  \left(  \mathrm{d} y + B y \, \mathrm{d} x^0 \right) ,
\end{equation} 
and
\begin{equation}
\begin{split}
\eta &
= - (\mathrm{d} x^0 )^2  + \mathrm{d} x^T \mathrm{d} x
=  - (\mathrm{d} x^0 )^2  +  (\mathrm{d} y + B y \mathrm{d} x^0 )^T  (\mathrm{e} ^{ x^0 B } )^T (\mathrm{e} ^{ x^0 B } ) (\mathrm{d} y +  B y \mathrm{d} x^0 )\\&
=- (1 - (B y )^T (B y ) ) \Big( \mathrm{d} x^0  - (1 - (B y )^T (B y ) )^{-1} (B y )^T  \mathrm{d} y  \Big)^2  \\ &
\phantom{ - }+  (1 - (B y )^T (B y ) )^{-1}\Big( (B y )^T  \mathrm{d} y  \Big)^2 + \mathrm{d} y^T \mathrm{d} y.
\end{split}
\end{equation} 
Setting
\begin{equation}
\Lambda =1 - (B y )^T  B y, \quad A = - \Lambda^{-1} (B y )^T \mathrm{d} y
\end{equation} 
we can write
\begin{equation}
\eta = \mathrm{d} y^T  \Big( 1 + \Lambda^{-1}  B y (B y )^T \Big) \mathrm{d} y  - \Lambda (\mathrm{d}  x^0 +  A )^2 .
\end{equation} 
Comparing with (\ref{kkquotmetric}) we see that  the metric on the quotient $S^4 $  is thus
\begin{equation}
\label{mmmmmetriccc} 
 h =\phi^\ast  k , \quad k = \mathrm{d} y^T  \left( 1 + \Lambda^{-1}  B y (B y )^T \right)  \mathrm{d} y ,
\end{equation} 
where $\phi $ is the standard embedding $S^4 \hookrightarrow \mathbb{R} ^{ 5 }$. Since $ y $ is related to $x$ by an orthogonal transformation,  restriction to $ S^4$ is achieved simply  by imposing $y^ T y = R^2  $.  As a sanity check note that $\lambda =0 \Rightarrow B =0 , \Lambda =1 $ so that the quotient is $S^4 $ with its round metric.

If  $\xi$  has the canonical form (\ref{kvfs4main}) then 
\begin{equation}
B =\left(\begin{array}{ccccc}0 &- \beta_1  & 0 & 0 & 0 \\\beta_1  & 0 & 0 & 0 & 0 \\0 & 0 & 0 & -\beta_2  & 0 \\0 & 0 &  \beta_2  & 0 & 0 \\0 & 0 & 0 & 0 & 0\end{array}\right)
\end{equation} 
and we get
\begin{equation}
\label{dsadasda} 
\begin{split} 
\Lambda & =1- \beta_1^2 (y_1^2 + y_2^2 )- \beta_2^2 (y_3^2 + y_4^2 ) ,\\
[( B y )^T \mathrm{d} y  ]^2  &= \beta_1^2 ( y^2 \mathrm{d} y^1 - y^1 \mathrm{d} y^2 )^2 + \beta_2^2 ( y^4 \mathrm{d} y^3 - y^3 \mathrm{d} y^4 )^2  .
\end{split} 
\end{equation} 
Note that $\Lambda >0 $ provided that $\beta_1 , \beta_2  $ satisfy  (\ref{sn23445}).  
Switching to double polar coordinates $y^1 =r \cos \alpha $, $y^2 =r \sin \alpha $, $y^3 =\rho \cos \gamma  $, $y^4 =\rho \sin \gamma  $ in (\ref{dsadasda}),  (\ref{mmmmmetriccc}) we get
\begin{equation}
\label{explmet} 
h =\phi^\ast k , \quad k = \mathrm{d} r^2 + \mathrm{d} \rho^2 + \mathrm{d} y_5^2 + \frac{(1- \beta_2^2 \rho^2 )r^2 \mathrm{d} \alpha^2  + (1- \beta_1^2 r^2 ) \rho^2 \mathrm{d} \gamma ^2 + 2\beta_1 \beta_2  r^2 \rho^2  \mathrm{d} \alpha \mathrm{d} \gamma }{1- \beta_1^2  r^2 - \beta_2^2 \rho^2 }.
\end{equation} 
The restriction $y^T  y =R^2 $  is now $y_5^2 + \rho^2 + r^2 =R^2 $.  The geometry described by  (\ref{explmet}) is that of a 2-torus fibration over 
the  interior of the round closed quarter 2-sphere
\begin{equation}
Q = \{ (y^5 , \rho , r ): r^2 + \rho^2 + y_5^2 =R^2, \, \rho \geq 0, \, r \geq 0 \}
\end{equation} 
attached to the boundary $\partial Q  $ where the circle fibres collapse to zero size.
 The fibres are parametrised by $(\alpha , \gamma  )$. Because of the constraint (\ref{sn23445}) the factors $1- \beta_1^2 r^2 $, $1- \beta_2^2 \rho^2 $,  $1- \beta_1 r^2 - \beta_2 \rho^2  $ never vanish, so the $\alpha$-circles have maximum radius $\tfrac{R}{\sqrt{ 1- \beta_1^2 R^2 }}$ for $r =R $ and collapse to zero size for $ r =0 $. The $\gamma $-circles have a similar behaviour with $r $ replaced by $\rho$. Independently translating and reflecting along the circle fibres gives an $O (2) ^2 $ subgroup of the isometry group. There is an additional $ \mathbb{Z} _2 $ isometry generated by $y^5 \mapsto - y^5 $ which fixes the base $Q$.
It is clear that the symmetry is enhanced if $\beta_2 =\pm \beta_1 $, in fact we see from Table \ref{s4quotiso} below that then $\mathfrak{l } = \mathfrak{so }(2) \oplus \mathfrak{so }(3)$.

If   $\beta_1 \beta_2 =0$  (\ref{mmmmmetriccc}) it is convenient to introduce polar coordinates in one plane only. Up to relabeling the coordinates, we can assume $\beta_2 =0 $ and set $\beta_1 =\beta $,  $y^1 = r \cos \alpha  $, $y^2 =r \sin \alpha  $. Then
\begin{equation}
\label{s4quotttt} 
\begin{split} 
k &
= g_{ E^3 } + \mathrm{d} r^2 + \frac{r^2 \mathrm{d} \alpha ^2 }{1 - \beta^2 r^2 } ,
\end{split} 
\end{equation} 
where 
\begin{equation} 
 g_{ E^3 } = (\mathrm{d} y^3)^2 +  (\mathrm{d} y^4 )^2 +( \mathrm{d} y^5 )^2
 \end{equation}
 is the Euclidean metric on $\mathbb{R} ^3 $. The geometry described by (\ref{s4quotttt}) is that of a $U (1) $ fibration over the interior of the round closed half 3-sphere 
 \begin{equation}
 S =\{ (y^3 , y^4 , y^5 , r ): y_3^2 + y_4^2 + y_5^2 + r^2 =R^2, \, r \geq 0 \} 
 \end{equation}
 attached to the boundary  $\partial S $  where  the circle fibres collapse to zero size. The angular coordinate $\alpha $ parametrises the circle fibres. The circle size is maximal for $r =R $ and decreases with $r$ reaching zero for $r =0$. It is clear from this description that the  isometry group of $\phi^\ast k $ is  $O (3) \times O (2) \subset O (5) $, with $O (3) $ arising as the isometry group of $S$. 
 
By (\ref{isoquotform})  the hereditary isometry algebra of   the metric (\ref{mmmmmetriccc}) on $S^4 $ is
\begin{equation}
\mathfrak{l } =\frac{ N_{X_\xi }(\mathbb{R}  \oplus \mathfrak{so} (5)    ) }{ \mathbb{R}   X_\xi  },
\end{equation}
where $X_\xi = (1 , - \omega  )\in \mathbb{R}  \oplus \mathfrak{so} (5)$ is the element  corresponding to the KVF $\xi$, $N_{X_\xi }(\mathbb{R}  \oplus \mathfrak{so} (5) )$ is the normaliser of  $X_\xi $ in  $\mathbb{R}  \oplus \mathfrak{so} (5) $ and $\mathbb{R}  X_\xi $ is the 1-dimensional Lie algebra generated by $X_\xi $. Denote by $N_\omega (\mathfrak{so }(5) )$ the normaliser of $\omega$ in $ \mathfrak{so }(5) $. Then $N_{ X_\xi } (\mathbb{R}  \oplus \mathfrak{so } (5) ) =\mathbb{R}  \oplus N_\omega  ( \mathfrak{so} (5)  )$ and we can use the quotient by $\mathbb{R}   X_\xi $ to fix the $\mathbb{R}  $ part to zero so that we only have to calculate $ N_ \omega  ( \mathfrak{so} (5)  )$.
Using (\ref{partcorrrlkj}) we find the result in Table \ref{s4quotiso}.
\begin{table}[h]
\caption{Hereditary isometry algebra $\mathfrak{l }$  of the KK reduction of $ - \mathbb{R}  \times S^4  $ by $\xi  =\partial_0 +  \beta_1 R_{ 12 }+ \beta_2 R_{ 34 } $.}
\centering
\begin{tabular}{|c|c|c|}
\hline 
 Conditions & $\mathfrak{l }$  & Generators\\ \hline
 $\beta_1 \beta_2 \neq 0, \, \beta_1 \neq \pm \beta_2  $& $\mathfrak{so }(2) \oplus \mathfrak{so }(2)$ & $\epsilon _{ 12 }; \epsilon _{ 34 }$ \\ \hline
 $\beta_1 \beta_2 \neq 0, \, \beta_1 = \pm \beta_2  $& $\mathfrak{so }(2) \oplus \mathfrak{so }(3)$ & $\epsilon _{ 12 } \pm  \epsilon _{ 34 }; \epsilon _{ 12 } \mp  \epsilon _{ 34 }, \epsilon _{ 13 } \pm  \epsilon _{ 24 }, \epsilon _{ 14 } \mp  \epsilon _{ 23 }$ \\ \hline
 $ \beta_2 = 0  $& $\mathfrak{so }(2) \oplus \mathfrak{so }(3)$ & $\epsilon _{ 12 }; \epsilon _{ 34 }, \epsilon _{ 35 }, \epsilon _{ 45 }$ \\ \hline
\end{tabular}
\label{s4quotiso}
\end{table}

 \section{Kaluza-Klein quotients of $\mathbb{R}  \times \mathrm{AdS}_{4 } $} 
 \label{ads4} 

 Recall that we take global coordinates $(y , x^1 , x^2 , x^3 , x^4 , x^5 ) \in \mathbb{R} \times \mathbb{R}^{ 2,3 } $, with $x^1 , x^2 $ timelike and the other coordinates spacelike. We consider $\mathrm{AdS}_4 $ as the universal cover of the quadric in $\mathbb{R}^{ 2,3 }$
 \begin{equation}
 x_1^2 +  x_2^2 - x_3^2 - x_4^2  - x_{5}^2 = R^2 .
\end{equation} 
The parallel vector field $\varphi$ is
\begin{equation}
\varphi = c \partial_y 
\end{equation} 
with
 \begin{equation}
 \label{rfhiads} 
R  |\varphi | = R c =\sqrt{ 2}.
 \end{equation} 

 \subsection{One parameters subgroups of $G$}
 \label{oneparads4} 
 The isometry group of $M =\mathbb{R}  \times \mathrm{AdS}_4 $  is the  direct product $\mathbb{R} \times O (2,3 )$ with Lie algebra   $\mathfrak{g}  = \mathbb{R}  \oplus  \mathfrak{so }(2,3)$. Let $ (\tau^y , - \omega )\in \mathfrak{g}  $. A computation similar to (\ref{xitoxxxxx}) gives the corresponding KVF,
\begin{equation}
\xi =\tau + \lambda , \quad   \tau = \tau^y \partial_y,\quad  
 \lambda = \tfrac{1}{2} \omega ^{ ij } (x_i \partial _j - x_j \partial_i ) .
\end{equation}
Therefore we have the correspondence
\begin{equation}
(\tau ^y , - \omega ) \in \mathfrak{g}  \quad  \longleftrightarrow \quad   \tau^y \partial_y + \frac{1}{2} \omega^{ ij  } R_{ i j } \in \mathfrak{X}  (M) ,
\end{equation} 
where, for $ i, j =1 , \ldots , 5$,
\begin{equation}
R_{ ij  }= x_i  \partial_j  - x_j  \partial_i  .
\end{equation}
In particular, taking generators  $ (\epsilon_{ ij })$ of $ \mathfrak{so }(2,3) $,  $ \epsilon_{ ji }=- \epsilon_{ ij }$,
\begin{equation}
\label{so23gen} 
\epsilon_{ ij } =
\begin{cases} 
 e_1 \otimes e^2 - e_2 \otimes e^1, \quad & \text{if }  (i,j )=(1,2),  \\
 - ( e_i \otimes e^j +  e_j \otimes e^i  ),  &\text{if }i=1,2,\,  j = 3,4,5, \\
e_j \otimes e^i  - e_i \otimes e^j ,  &\text{if }  3 \leq i<j \leq 5 , 
\end{cases} 
\end{equation} 
 with bracket
\begin{equation}
\label{so23bracket} 
[\epsilon_{ ij }, \epsilon_{ kl }]
=\eta^{ 2,3 }_{ ik } \epsilon_{ jl }+\eta^{ 2,3 }_{ jl } \epsilon_{ ik } - \eta^{ 2,3 }_{ il } \epsilon_{ jk } - \eta^{ 2,3 }_{ jk }\epsilon_{ il },
\end{equation} 
we get
\begin{equation}
\label{corrspecads} 
\epsilon_{ ij }\in \mathfrak{so }(2,3)  \quad \longleftrightarrow \quad R_{ ij } \in \mathfrak{X}  ( \mathrm{AdS}_4 ).
\end{equation}

\begin{proposition}\label{kvfads}
Let $\xi$ be a KVF on $M =  \mathbb{R}  \times  \mathrm{AdS}_4  $ and assume that $|\xi |$ never vanishes. Then there are coordinates such that, up to rescaling,
\begin{equation}
\label{xiadsxi} 
\xi =  \partial_y + \lambda ,
\end{equation} 
with $ \lambda\in \mathfrak{X}  (\mathrm{AdS}_4 )  $ one of the following KVFs:
\begin{align}
\label{sl1} 
\lambda_4 &=\beta R_{ 34 } , \quad \beta > 0,\\
 \label{sl2} 
\lambda_5  &=R _{ 13 } - R_{ 34 } ,\\
 \label{sl3} 
\lambda_6 &=R_{ 24 } + R_{ 34 } - R_{ 12 }- R_{ 13 },\\
 \label{sl4} 
\lambda_{ 11 }&=\beta  ( R_{ 13 } + R_{ 24 }) ,\quad \beta > 0,\\
 \label{t1} 
\lambda_1 &= \beta  R_{ 12 } , \quad  \beta >0, \quad  | \varphi | < \sqrt{ 2 }\beta  ,\\
 \label{t3} 
\lambda_8 &= R_{ 24 } - R_{ 13 }+ (1 + \beta )R_{ 34 } -(1 - \beta ) R_{ 12 }, \quad \beta <0, \quad  |\varphi | \leq - \sqrt{ 2 }\beta ,\\
\label{t2} 
\lambda_{ 10 }&= \beta_1   R_{ 12 } + \beta_2  R_{ 34 }, \quad  \beta_1> \beta_2 >0  , \quad |\varphi | < \sqrt{ 2 }\beta_1   ,\\
\label{t2bis} 
\lambda_{ 10* }&= \beta (  R_{ 12 } +   R_{ 34 }), \quad  \beta >0, \quad    | \varphi | < \sqrt{ 2 }\beta .
\end{align} 
The KVF $\xi =\partial_y + \lambda $  is spacelike for $\lambda$ given by (\ref{sl1})--(\ref{sl4}); timelike for $\lambda$ given by (\ref{t1})--(\ref{t2bis}).
\end{proposition}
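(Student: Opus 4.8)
The plan is to reduce the statement to a conjugacy classification in $\so(2,3)$ followed by a pointwise causal analysis on the quadric. Writing $X=(\tau^y,-\omega)$, the associated KVF is $\xi=\tau^y\partial_y+\lambda$ with $\lambda$ the field on $\AdS_4$ induced by $\omega$, and since the $\RR$ and $\AdS_4$ factors are orthogonal we have $|\xi|^2=(\tau^y)^2+|\lambda|^2$, where $|\lambda|^2$ is the restriction to the quadric $\{\eta(x,x)=-R^2\}$ of the ambient quadratic form. Because the $\RR$ summand of $\g$ is central, conjugation by $g\in G$ fixes $\tau^y$ and acts on $\omega$ by $\operatorname{Ad}_g$, while rescaling acts diagonally on $(\tau^y,\omega)$; hence the equivalence classes of one-parameter subgroups are controlled by the $O(2,3)$-conjugacy class of $\omega$ together with the value of $\tau^y$ relative to the moduli of that class. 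First I would fix normal forms for $\omega$, then for each decide for which $\tau^y$ the function $|\xi|^2$ is nowhere zero; this singles out the admissible $\tau^y$, and up to rescaling I normalise $\tau^y=1$, arriving at $\xi=\partial_y+\lambda$ as in (\ref{xiadsxi}).

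For the normal forms I would apply the classical theory of a skew-self-adjoint endomorphism $\omega$ of the indefinite space $(\RR^5,\eta^{2,3})$. Its complex eigenvalues occur in quadruples $\{\mu,-\mu,\bar\mu,-\bar\mu\}$, and since the dimension is odd there is always a kernel direction; organising the real Jordan structure permitted by signature $(2,3)$ --- two elliptic rotations, an elliptic together with a hyperbolic boost, two boosts, a single loxodromic block, and the degenerate nilpotent and mixed cases --- produces the finite master list from which $\lambda_1,\dots,\lambda_{11}$ are drawn. The generators $R_{ij}$ of (\ref{corrspecads}) make these explicit: $R_{12}$ is the elliptic rotation in the timelike plane, $R_{34}$ a spacelike rotation, and $R_{13}$ a boost. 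I would then use reflections in $O(2,3)$ and the scaling to fix the signs of the continuous moduli $\beta$ and, when several appear, their ordering.

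The core of the proof is the causal classification. Since $\omega^2$ is $\eta$-self-adjoint, $|\lambda|^2=-\eta(x,\omega^2x)$, and its range on the quadric is governed by the squares of the eigenvalues of $\omega$. I expect three behaviours. For the spacelike-elliptic, hyperbolic and nilpotent-square families the form $|\lambda|^2$ is nonnegative --- e.g. $|\beta R_{34}|^2=\beta^2(x_3^2+x_4^2)\ge0$, $|R_{13}-R_{34}|^2=(x_1+x_4)^2\ge0$, and $|\beta(R_{13}+R_{24})|^2=\beta^2(R^2+x_5^2)>0$ using the quadric constraint --- so $|\xi|^2\ge1>0$ is spacelike and nowhere zero, yielding (\ref{sl1})--(\ref{sl4}) with no restriction on the moduli beyond their sign. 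For the timelike-elliptic contribution $|\beta R_{12}|^2=-\beta^2(x_1^2+x_2^2)\le-\beta^2R^2$ is bounded above by a negative number, so $\xi$ is timelike precisely when $1<\beta^2R^2$; recalling $R|\varphi|=\sqrt2$ from (\ref{rfhiads}) this is $|\varphi|<\sqrt2\,\beta$, giving (\ref{t1}), and the same computation for the remaining families yields the bounds in (\ref{t3})--(\ref{t2bis}). In the indefinite mixed case $\beta_1R_{12}+\beta_2R_{34}$ I would eliminate $x_1^2+x_2^2$ using $\eta(x,x)=-R^2$ and maximise; the supremum equals $-\beta_1^2R^2$ and is finite only if $\beta_1\ge\beta_2$, which is exactly why the timelike rotation must dominate and why $\beta_1=\beta_2$ splits off as $\lambda_{10*}$.

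The hard part will be twofold. First, the bookkeeping of the non-semisimple and mixed conjugacy classes of $\so(2,3)$: one must check that the list of normal forms is both complete and irredundant, since these degenerate orbits are where errors hide. Second, and more delicate, is reading off the exact inequalities from the range analysis. For the elliptic families the supremum of $|\lambda|^2$ is attained on the quadric and produces a strict bound, whereas for a type such as $\lambda_8$ the relevant supremum is only approached and not attained, so the nonvanishing condition closes up to the non-strict bound $|\varphi|\le-\sqrt2\,\beta$ in (\ref{t3}). Distinguishing attained from limiting suprema, and tracking the signs fixed by the residual $O(2,3)$ and scaling freedom, is what upgrades the qualitative trichotomy into the precise families and causal assignments of the statement.
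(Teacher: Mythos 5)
Your proposal is correct and follows essentially the same route as the paper's proof: split off and normalise the translation part to $\partial_y$, reduce to the conjugacy classification of $\omega$ in $\so(2,3)$, then compute the range of $|\lambda|^2$ on the quadric and demand $|\xi|^2 = 1+|\lambda|^2$ be nowhere zero, with exactly the paper's treatment of strict versus non-strict inequalities (attained suprema for $\lambda_1,\lambda_{10}$, unattained for $\lambda_8$) and the translation into conditions on $|\varphi|$ via $R|\varphi|=\sqrt{2}$. The one divergence is that the paper does not carry out the Jordan-type normal-form analysis of skew-self-adjoint endomorphisms of $(\RR^5,\eta^{2,3})$ that you outline (and rightly flag as the hard bookkeeping): it imports the list of fifteen conjugacy classes of KVFs from \cite[Section~4.2.2]{Figueroa-OFarrill:2004lpm} and merely checks the norm bounds family by family, so your route trades that citation for a genuinely nontrivial classification which your sketch does not actually execute.
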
 

\begin{proof}
To avoid zeros the translation part $\tau$ of $\xi$ must be non-zero and we can rescale it so that $\tau = \partial_y $. Then $|\xi |^2 = 1  + |\lambda |^2 $ so we need either $|\lambda |^2 > -1 $, leading to a spacelike KVF, or $|\lambda |^2 < -1 $, leading to a timelike one.

The KVFs on $ \mathrm{AdS}_5 $, up to conjugation,  are listed in
\cite[Section~4.2.2]{Figueroa-OFarrill:2004lpm}.  We report the list here, following the same numbering.
\begin{equation*} 
\begin{alignedat}{2}
 \lambda& && |\lambda |^2 \\\hline
\lambda_1 &=\beta R_{ 12 },\,   \beta >0 & &- \beta^2 (x_1^2 + x_2^2 )\\
\lambda_2 &= \beta R_{ 13} ,\,  \beta >0 & & \beta^2 (x_1^2 - x_3^2 ),\\
\lambda_3 &=R_{ 12 }- R_{ 23 } &&- (x_1 + x_3 )^2 \\
\lambda_4 &=\beta R_{ 34 }, \,  \beta >0 &&\beta^2 (x_3^2 + x_4^2 ),\\
\lambda_5 &=R_{ 13 }- R_{ 34 } && (x_1 + x_4 )^2 ,\\
\lambda_6 &= R_{ 24 }+ R_{ 34 }- R_{ 12 }- R_{ 13 } & &0\\
\lambda_7 &= R_{ 24 }+ R_{ 34 }- R_{ 12 }- R_{ 13 } +  \beta (R_{ 14 } - R_{ 23 } ), \,  \beta >0\quad&&   4\beta (x_2 + x_3 )(x_1 + x_4 ) + \beta^2 ( R^2 + x_5^2 ),\\
\lambda_8 &= R_{ 24 }+ R_{ 34 }- R_{ 12 }- R_{ 13 } + \beta (R_{ 12 }+ R_{ 34 }), \,  \beta \neq 0&& 2\beta  \big[(x_1 + x_4 )^2 + (x_2 + x_3 )^2  \big]- \beta^2  (R^2 + x_5^2 ),\\
\lambda_9 &=\beta_1  (R_{12 }- R_{ 34 }) + \beta_2 (R_{ 14 }- R_{ 23 }),\,   \beta_i >0 && (\beta_2^2 - \beta_1^2)(R^2 + x_5^2 )-4 \beta_1 \beta_2 (x_1 x_3 + x_2 x_4 ) \\
\lambda_{10} &=\beta_1  R_{12 } + \beta_2  R_{ 34 },\,  \beta_i >0&&  \beta _2^2 (x_3^2 + x_4^2 )- \beta_1^2 (x_1^2 + x_2^2 ),\\
\lambda_{11} &=\beta_1  R_{13 } + \beta_2  R_{ 24 },\,   \beta_1 \geq \beta_2  >0&&  \beta_1^2  (x_1^2 - x_3^2 ) + \beta_2^2 (x_2^2 - x_4^2 ),	\\
\lambda_{12} &=\beta_1  R_{13 } + \beta_2  R_{ 45 },\,  \beta_i >0&&  \beta_1^2  (x_1^2 - x_3^2 ) + \beta_2^2 (x_4^2 + x_5^2 )	,\\
\lambda_{13} &=R_{ 12 } + R_{ 13 } + R_{ 15 } - R_{ 24 }- R_{ 34 }- R_{ 45 } && (x_1 - x_4 )^2 -4 (x_2 + x_3 )x_5 ,\\
\lambda_{14} &=R_{ 12 }- R_{ 23 } + \beta R_{ 45 }, \,  \beta >0&&  \beta^2 (x_4^2 + x_5^2 )- (x_1 + x_3 )^2 ,\\
\lambda_{15} &=R_{ 13 }- R_{ 34 } + \beta R_{25 }, \,  \beta >0 &&  (x_1 + x_4 )^2 + \beta^2 (x_2^2 - x_5^2 ).
\end{alignedat}
\end{equation*} 
The parameters  $\beta$, $\beta_i $  can always be chosen to satisfy the listed constraints  by conjugating within $SO(2,3) $. The norms of the vectors $\lambda_1 $--$ \lambda_{ 15 }$ can be computed by working on $\mathbb{R} ^{ 2,3 } $ and imposing the  $\mathrm{AdS}_5 $  constraint $ x_1^2 + x_2^2 - x_3^2 - x_4^2 - x_5^2 =R^2 $. We find the following bounds.
\begin{align*} 
\lambda_1 : \quad &- \infty  < |\lambda_1  |^2  \leq - \beta^2 R^2, \\
\lambda_2 : \quad &- \infty < |\lambda_2  |^2 < \infty ,\\
\lambda_3 : \quad &-  \infty < |\lambda_3  |^2 \leq 0, \\
\lambda_4 : \quad &0 \leq  |\lambda_4  |^2 < \infty ,\\
\lambda_5 : \quad &0 \leq  |\lambda_5 |^2 < \infty ,\\
\lambda_6 : \quad & | \lambda_6  |^2 =0,\\
\lambda_7 : \quad &- \infty < |\lambda_7  |^2 < \infty ,\\
\lambda_8 : \quad &\begin{cases} 
- \infty < |\lambda_8 |^2 < \infty & \beta >0,\\
- \infty < |\lambda_8  |^2 < - \beta^2 R^2  & \beta <0,\\
\end{cases} \\
\lambda_9 : \quad & - \infty < |\lambda_9  |^2 < \infty , \\
\lambda_{10} : \quad & \begin{cases} 
- \infty < |\lambda_{ 10} |^2 < \infty  & \beta_2^2 > \beta_1^2, \\
- \infty < |\lambda_{ 10}|^2 \leq - \beta_1^2 R^2   & \beta_2^2 \leq  \beta_1^2,
\end{cases} \\
\lambda_{11} : \quad & \begin{cases}
- \infty < |\lambda_{ 11} |^2 < \infty &\beta_1 > \beta_2 ,\\
\beta^2  R^2  \leq  |\lambda_{ 11}|^2 < \infty &\beta_1 = \beta_2 =\beta ,
\end{cases} 	\\
\lambda_{12} : \quad & - \infty < |\lambda_{12} |^2 < \infty ,\\
\lambda_{13} : \quad & - \infty < |\lambda_{ 13} |^2 < \infty,\\
\lambda_{14} : \quad & - \infty < |\lambda_{ 14} |^2 < \infty,\\
\lambda_{15} : \quad &- \infty < |\lambda_{ 15} |^2 < \infty .
\end{align*} 
In particular note that there are points  with arbitrarily small $| x_1 +x_4 |$, $| x_2+x_3| $ and arbitrarily large $R^ 2+x_ 5^2= (x_ 1+x_ 4)(x_1-x_	4)+(x_ 2+ x_3)(x_ 2-x_ 3) $. It follows that  $|\lambda_8 |^2 $ is unbounded below and, if $\beta <0 $, $|\lambda_8 |^2 <- \beta^2 R^2 $, while if $\beta >0 $ then $ |\lambda_8  |^2 $ is also unbounded above.
The vectors  $\lambda_4 $, $\lambda_5 $, $\lambda_6 $ satisfy $|\lambda_i  |^2 > -1 $. The vector  $\lambda_{ 11 } $ satisfies $|\lambda_{ 11 }|^2 > -1 $ provided that 
\begin{align*} 
\lambda_{ 11 }: \quad &\beta_1 =\beta_2 =\beta .
\end{align*} 
The vectors $\lambda_1 $, $\lambda_8 $, $\lambda_{ 10 }$ satisfy $|\lambda_i  |^2 < -1 $ provided that 
\begin{align*} 
\lambda_1 : \quad &\beta R >1,\\
\lambda_8 : \quad & \beta <0, \, -\beta R \geq 1,\\
\lambda_{ 10 }: \quad & \beta_1 \geq \beta_2 , \, \beta_1  R >1.
\end{align*} 
Rewriting the conditions  in terms of $|\varphi |$ using (\ref{rfhiads}) gives the stated result.
\end{proof} 
Note that since (\ref{xiadsxi}) always has a non-trivial translation part, the associated one-parameter group is non-compact and has the topology of a line.

\subsection{Preserved SUSY}

\begin{proposition}\label{psusyads4} 
The KVFs of Proposition \ref{kvfads} preserving a fraction $\nu>0$  of SUSY are given by Table \ref{presusyads}. 
\begin{table}[htp]
\caption{Fraction $\nu>0$  of SUSY preserved by the KVFs of Proposition \ref{kvfads}.}
\label{presusyads} 
\centering
\begin{tabular}{|c|c|c|}
\hline 
$\lambda $ & Condition & $\nu$  \\ \hline
$\lambda_4  $ &  $0< \beta = |\varphi | $ & $ 1/2$\\\hline
$\lambda_1 $ & $0<\beta = \sqrt{ 2 }|\varphi | $ &  $ 1/2$\\ \hline
$\lambda_6 $ &$ | \varphi |=2^{ 3/4 } $&$1/4$\\ \hline
$\lambda_8 $ &
$|\varphi |=\varphi_1 \text{ or }  |\varphi |=\varphi_2$ &$ 1/4$ \\  \hline
$\lambda_{ 10 } $ &
$( \beta_1 +  \sqrt{ 2 }\beta_2 =\sqrt{ 2 } |\varphi |, \beta _1 > \sqrt{ 2 }\beta _2 >0) \text{ or }  (|\beta_1 -  \sqrt{ 2 }\beta_2 | =\sqrt{ 2 } |\varphi |, \beta _1 > \beta _2 >0, \beta _1 \neq \sqrt{ 2 }\beta _2 )$
 &$1/4$\\  \hline
\hline
$\lambda_{ 10* } $ &$0< (\sqrt{ 2 } - 1 )\beta  =\sqrt{ 2 } |\varphi | $ & $ 1/4$\\ \hline
\end{tabular}
\end{table}

\end{proposition}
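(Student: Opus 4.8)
The plan is to reduce the problem, for each KVF of Proposition~\ref{kvfads}, to a pointwise eigenvalue computation on the spinor space $\Sigma$, exactly as in the proof of Proposition~\ref{psusys4}. Since $\xi = \partial_y + \lambda$ satisfies $[\lambda,\partial_y]=0$ it preserves $\varphi = c\,\partial_y$, hence preserves $\mathcal D$, so $L_\xi$ maps Killing spinors to Killing spinors; as the background is maximally supersymmetric, evaluation at any point $p$ identifies $S$ with $\Sigma_p$, and therefore $\tfrac1n\dim\ker L_\xi|_S = \tfrac1n\dim\ker\mathcal M_p$, where $\mathcal M_\pm$ are the two decoupled operators of the Setup section. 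With $\varphi = c\,\partial_y$ and $g_{yy}=+1$ the contraction collapses to
\begin{equation*}
\mathcal M_\pm = \tfrac18 (\mathrm d\xi^\flat)_{\mu\nu}\gamma^{\mu\nu} \pm \tfrac{ic}{2} \pm \tfrac{ic}{4}\lambda^k\gamma_k\gamma_y ,
\end{equation*}
so the task is to evaluate the bivector $\tfrac18(\mathrm d\xi^\flat)_{\mu\nu}\gamma^{\mu\nu}$ and the transvection term $\lambda^k\gamma_k\gamma_y$ at a well-chosen point and read off $\det\mathcal M_\pm$ (which fixes the supersymmetric value of $|\varphi|$) and $\dim\ker\mathcal M_\pm$ (which fixes $\nu$).

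The most reliable choice of point is a fixed point of $\lambda$ on $\mathrm{AdS}_4$, where $\lambda$ vanishes and the transvection term drops, leaving $\mathcal M_\pm = \tfrac18(\mathrm d\xi^\flat)_{\mu\nu}\gamma^{\mu\nu}\pm\tfrac{ic}{2}$. Realising $\mathrm{AdS}_4$ as the quadric in $\RR^{2,3}$, a generator $R_{ij}$ fixes the base point $x_1=R$ (all other $x^a=0$) precisely when neither index is the normal index, in which case $\mathrm d\xi^\flat|_p$ is its tangential bivector; a generator carrying the normal index is instead a transvection, contributing to $\lambda|_p$ but not to $\mathrm d\xi^\flat|_p$. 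I would compute $\mathrm d\xi^\flat$ simply as the restriction to $T_p$ of the constant ambient two-form $\mathrm d\lambda^\flat$, and $\lambda^k$ as the frame components of $\lambda|_p$. The essential subtlety is that the generators~\eqref{so23gen} are \emph{boosts} for a mixed timelike--spacelike index pair and \emph{rotations} for like-signature pairs, so that the corresponding Clifford bivectors square to $+1$ or $-1$ respectively; keeping these signs straight is what determines whether $\mathcal M_\pm$ has real or imaginary eigenvalues. For the diagonal generators $\lambda_4$, $\lambda_{10}$ and $\lambda_{10*}$ the relevant Clifford elements commute and $\det\mathcal M_\pm$ factorises into linear factors, giving the $\beta=|\varphi|$ and $\beta_1\pm\sqrt2\beta_2=\sqrt2|\varphi|$ branches of Table~\ref{presusyads} after using $Rc=\sqrt2$ from~\eqref{rfhiads}.

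The hard cases are $\lambda_6$ and $\lambda_8$, which are null and parabolic and have no fixed point on $\mathrm{AdS}_4$: at any point both a boost/rotation bivector and a nonzero (null) transvection survive, along shared indices, so $\mathcal M_\pm$ is a genuine noncommuting combination. Here I would not diagonalise directly but instead compute a low power of the traceless part $N = \mathcal M_\pm \mp \tfrac{ic}{2}$; for $\lambda_6$ one finds that $N^4$ is a scalar multiple of the identity, whence the four eigenvalues of $N$ are explicit, $\det\mathcal M_\pm=0$ forces $|\varphi| = 2^{3/4}$, and the zero eigenspace is one-dimensional, i.e.\ $\nu=\tfrac14$, with $\lambda_8$ handled analogously. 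The same determinant computation must finally be run for $\lambda_5$ and $\lambda_{11}$ to confirm they preserve no supersymmetry: evaluating at a fixed point (which $\lambda_5$ possesses) shows $N$ to be nilpotent, so that $\mathcal M_\pm$ is invertible for every $c>0$, and an analogous computation disposes of $\lambda_{11}$. The main obstacle throughout is the careful bookkeeping of boost-versus-rotation signs and of the null transvection directions in the mixed cases; once these are pinned down, the eigenvalue analysis, checked against the explicit gamma matrices of Appendix~\ref{gmatr}, is routine.
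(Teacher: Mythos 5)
Your proposal is correct and follows essentially the same route as the paper's own proof: the paper likewise evaluates the two decoupled operators at the base point $x^1=R$ of the quadric (at $x^2=R$ for $\lambda_5$, which is a fixed point), with the point-fixing generators contributing the Clifford bivector term and the generators carrying the normal index contributing the transvection term $\lambda^k\gamma_k\gamma_y$, then determines determinants and kernel dimensions using the explicit gamma matrices of Appendix~\ref{gmatr}, and finally intersects the resulting conditions with the smoothness constraints of Proposition~\ref{kvfads} to obtain the parameter ranges in Table~\ref{presusyads}. The only loose ends in your sketch are routine: you never list $\lambda_1$ (a pure transvection at the base point, handled exactly like your commuting cases and giving the $0<\beta=\sqrt{2}\,|\varphi|$, $\nu=\tfrac12$ row), and for $\lambda_8$ the power $N^4$ is in fact \emph{not} a scalar multiple of the identity (unlike for $\lambda_6$), although $N^2$ has the form $p+q\,\gamma_2\gamma_3\gamma_4\gamma_y$ with $p,q$ scalars, which still yields the eigenvalues, the conditions \eqref{pvs1}--\eqref{pvs2} and $\nu=\tfrac14$ immediately.
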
 
For $\lambda=\lambda_8$ the values of $\varphi_1 $, $\varphi_2 $  are given by (\ref{pvs1})--(\ref{pvs2}) and the corresponding range of $\beta$ by (\ref{cvf1})--(\ref{cvf2}). The  conditions listed in Table \ref{presusyads} take into account both the constraints coming from Proposition \ref{kvfads} and those arising from SUSY preservation.
\begin{proof} 
$\varphi$ is the spacelike vector field
\begin{equation}
\varphi= c\partial_y 
\end{equation}
 with $c=| \varphi |  =\sqrt{ 2 }/ R $. Substituting in (\ref{xicontr}) we get
\begin{equation}
\label{betads} 
\beta_\xi \epsilon 
= \frac{|\varphi |}{4}  \xi ^i (\gamma_i \gamma_y + 3\eta_{ i y } ) r \epsilon 
= \frac{|\varphi |}{2}   \left( 1
+  \frac{1}{2}  \lambda  ^i \gamma_i \gamma_y \right) r \epsilon ,
\end{equation} 
so
\begin{equation}
\label{liederads} 
L_\xi \epsilon 
= \frac{1}{2}   \left[ |\varphi |\left( 1 +  \frac{1}{2}  \lambda  ^i \gamma_i \gamma_y \right) r 
+ \frac{1}{2} \omega_{ i j } \gamma^{ i j } \right]  \epsilon 
= \frac{1}{2}  \left[  br + a \right] \epsilon=0,
\end{equation} 
having defined
\begin{equation}
b =|\varphi |\left(  1 +  \frac{1}{2}  \lambda^i \gamma_i \gamma_y \right) ,
\quad  a =\frac{1}{2} \omega_{ i j } \gamma^{ i j }.
\end{equation} 

In order to compute $L_\xi \epsilon $ for $\xi$ one of the KVFs given in Proposition \ref{kvfads}  we  work at some convenient point $o$.
For $\lambda =\lambda_5 $ we take $o$ to have coordinates  $x^2  =R \Rightarrow x^1 =x^3 =x^4 =x^5 =0$ so that $( \partial_1 , \partial_y , \partial_3 , \partial_4 , \partial_5 ) $ is a local orthonormal frame for $\mathbb{R}  \times \mathrm{AdS}_4  $ at $o$. Since $ x^2 =R  $,  $\mathrm{d} x^2 |_o =0 $ hence  $\partial_2  $ acts by zero on spinors while $( \partial_1, \partial_y , \partial_3 , \partial_4 , \partial_5 )$ map to a representation  $(\gamma_1, \gamma_y , \gamma_3 , \gamma_4 , \gamma_5 )$ of $\Cl(1,4 )$ with $\gamma_1^2 =1 $, $\gamma_y^2 = \gamma_i^2 =-1 $, $i =3,4,5 $. 
For all the other KVFs we take  $o $ to have coordinates $x^1 =R \Rightarrow x^2 = x^3 =x^4 =x^5 =0 $ so that
$ (\partial_2,\partial_y , \partial_3, \partial_4, \partial_5)$ is an orthonormal frame for $\mathbb{R}  \times \mathrm{AdS}_4 $ at $o$. In this case $\partial_1 $ acts trivially on spinors while $( \partial_2 , \partial_y , \partial_3 , \partial_4 , \partial_5 )$ map to a representation  $(\gamma_2, \gamma_y , \gamma_3 , \gamma_4 , \gamma_5 )$ of $\Cl(1,4 )$ with $\gamma_2^2 =1 $, $\gamma_y^2 = \gamma_i^2 =-1 $, $i =3,4,5 $. An explicit choice of gamma matrices is given in Appendix \ref{gmatr}.
For $\lambda =\lambda_4 $ and $\lambda =\lambda_5 $ we have  $\lambda |_o =0 $   but in all the other cases $\lambda |_o \neq 0$.  With these choices, and substituting $R = \sqrt{ 2 }/ |\varphi |$, the values of $a|_o$, $b|_o $ are listed in Table \ref{xisusyads}.
\begin{table}[htp]
\caption{Values of $a |_o $, $b |_o $ for the KVFs of Proposition \ref{kvfads}.}
\centering
\begin{tabular}{|c|c|c|}
\hline 
$\lambda $ & $a |_o $ & $| \varphi |^{-1}  b |_o  $ \\ \hline
$\lambda_4  $ &  $\beta \gamma_{ 34 } $ &1 \\\hline
$\lambda_5  $ & $-\gamma_{ 13 } - \gamma_{ 34 } $& 1 \\ \hline
$\lambda_6  $ &$\gamma_{ 34 }-\gamma_{ 24 }  $ & $1- \tfrac{1}{\sqrt{ 2} |\varphi |} (\gamma_2 + \gamma_3 )\gamma_y  $\\ \hline
$\lambda_{ 11}$&$-\beta   \gamma_{ 24 } $& $1 + \tfrac{ \beta    }{\sqrt{ 2}|\varphi |} \gamma_3 \gamma_y $  \\\hline
$ \lambda_1  $ & $0$ &$1 +  \tfrac{\beta  }{\sqrt{ 2} |\varphi |}  \gamma_2 \gamma_y $\\ \hline
$\lambda_8 $ &  $ (1 + \beta ) \gamma_{ 34 } - \gamma_{ 24 } $ & $ 1 - \frac{1}{\sqrt{ 2} |\varphi |} (\gamma_3  + (1-\beta  )\gamma_2  ) \gamma_y $\\\hline
$ \lambda_{ 10 } $& $\beta_2 \gamma_{ 34 }$ & $1 + \frac{ \beta_1    }{\sqrt{ 2} |\varphi |} \gamma_2 \gamma_y $  \\ \hline
$ \lambda_{ 10* } $& $\beta  \gamma_{ 34 }$ & $1 + \frac{ \beta}{\sqrt{ 2} |\varphi |} \gamma_2 \gamma_y $  \\ \hline
\end{tabular}
\label{xisusyads}
\end{table}

Equation (\ref{liederads}) gives the two linear equations
\begin{equation}
(a + ib )\epsilon_1 =0 =(a -i b )\epsilon_2 .
\end{equation} 
It is clear that there are no non-trivial solutions if $\lambda =0 $. 
For  $\lambda =\lambda_5 $,  $a^2 =0 $ and $b $ is not nilpotent so  no SUSY is preserved. The case $\lambda =\lambda_{ 11 }$  also does not preserve any SUSY.
In the other cases computing the rank of the matrices making use of an explicit representation, such as the one given in Appendix \ref{gmatr}, gives the following result.
\begin{alignat*}{3}
& \lambda & & \text{Condition}\quad  & & \text{Fraction of SUSY}  \\ \hline
&\lambda_4  \quad  &&|\beta |= |\varphi |&& 1/2 \\
& \lambda_1 & &|\beta |  =  \sqrt{ 2 } | \varphi | \quad &&1/2 \\ 
&\lambda_{ 10 } \quad & & 
| \beta_1 \pm \sqrt{ 2 }\beta_2 | =\sqrt{ 2 } |\varphi |
\quad  &&1/4 \\ 
&\lambda_{ 10 *} &&
(\sqrt{ 2 }\pm 1 ) |\beta | =\sqrt{ 2 } |\varphi |
\quad  & &1/4 \\ 
&\lambda_6  && |\varphi |=2^{ 3/4 }  &&1/4  \\ 
&\lambda_8  \quad && \text{see (\ref{pvs1})--(\ref{pvs2})  }  &&1/4 
\end{alignat*}
For $\lambda =\lambda_8 $ some SUSY is preserved if and only if  $|\varphi | =\varphi_1 $ or $|\varphi |= \varphi_2 $, with
\begin{align}
\label{pvs1} 
\varphi_1 &= \sqrt{\beta + \frac{3}{2} \beta ^2 + \sqrt{ 2 } | \beta  ^2 -2 |},\\
\label{pvs2}
\varphi_2 &= \sqrt{\beta + \frac{3}{2} \beta ^2 - \sqrt{ 2 } | \beta  ^2 -2 |} .
\end{align} 
where the value of $\beta$ is constrained by the condition that $ \varphi _1  $, $\varphi _2 $  are real.

Combining the conditions listed above with those, listed in Proposition \ref{kvfads},  coming from imposing that the KK reduction results in a smooth manifold  gives the stated result.  In the case of $\lambda _8 $  the combined conditions read
\begin{align} 
\label{cvf1} 
|\varphi |&=\varphi _1: \quad  - \left( \frac{\sqrt{ 17-4 \sqrt{ 2 }} + 1}{2 \sqrt{ 2 } -1} \right) \leq \beta  \leq 
- \left( \frac{\sqrt{ 17+4 \sqrt{ 2 }} - 1}{2 \sqrt{ 2 } +1} \right) ,\\
\label{cvf2} 
|\varphi |&=\varphi _2 : \quad  
- \left( \frac{\sqrt{ 17-12 \sqrt{ 2 }} + 1}{3-2 \sqrt{ 2 } } \right) < \beta  < 
- \left( \frac{\sqrt{ 17+12 \sqrt{ 2 }} +  1}{3 + 2 \sqrt{ 2 } } \right).
\end{align} 
\end{proof} 

\subsection{Geometry of the quotient}
The quotient $( \mathbb{R}  \times \mathrm{AdS}_4 )/ \Gamma_\xi $, for $\xi$ one of the KVFs listed in Proposition \ref{kvfads},  is diffeomorphic to $\mathbb{R}  ^4 $ equipped with a lorentzian or riemannian metric depending on the causal character of $\xi$.
In order to find the quotient metric we proceed similarly as we did in Section \ref{geomquots4},  first working on $\mathbb{R}  \times \mathbb{R}^{ 2, 3 } $ and then restricting to $\mathbb{R}  \times \mathrm{AdS}_4 $.

Let
\begin{equation}
\eta^{ 2,4 }=  \mathrm{d} z^2 - \mathrm{d} x_1^2 - \mathrm{d} x_2^2 + \mathrm{d} x_3^2 + \mathrm{d} x_4^2 + \mathrm{d} x_5^2  
\end{equation} 
be the flat metric on $\mathbb{R}  \times \mathbb{R} ^{ 2,3 }$, $ x =(x^1 , x^2 , \ldots , x^5   )\in \mathbb{R} ^{ 2,3 }$, $z$ a global coordinate on the $\mathbb{R}  $ factor. Define $M^\dagger =M^T \eta^{ 2,3 }$.  The KVF $\xi =\partial_z + \lambda $ can be written
\begin{equation}
\xi =U  \partial_z U^{-1}  
\end{equation}
 with $U =\exp(-z \lambda ) $. The coordinates $(y^i )$ defined by  $y = U x $ are good coordinates on the orbit space. The action of $\mathfrak{so }(2,3 )$ on $\mathbb{R} ^{ 2,3 }$ is linear so let $B$ be the matrix representing $\lambda$ with respect to the $x$  coordinates,
$\lambda x^i  = B^i_{ \ j } x^j $. Then
\begin{equation*}
\begin{split}
\eta^{ 2,4 }&
=\Lambda (\mathrm{d} z + A )^2 +  \mathrm{d} y^\dagger \left( 1 - \Lambda^{-1}B y\,  (B y )^\dagger  \right) \mathrm{d} y,
\end{split}
\end{equation*} 
with
\begin{equation}
\Lambda =1 + (B y )^\dagger B y, \quad 
A = \Lambda ^{-1}  (B y )^\dagger \mathrm{d} y.
\end{equation} 
The quotient metric is 
\begin{equation}
 h =\phi^\ast k , \quad k=\mathrm{d} y^\dagger \left( 1 - \Lambda^{-1}B y\,  (B y )^\dagger  \right) \mathrm{d} y
\end{equation} 
where $\phi^\ast $ is the restriction to $\mathrm{AdS}_4 \subset \mathbb{R} ^{ 2,3 }$, which is achieved by imposing $y^T \eta^{ 2,3 } y = x^T \eta  ^{ 2,3 }x =- R^2 $.

Let us consider some cases in more detail. If $\lambda = R_{ k l }$ then $B$ has components
\begin{equation}
B^i_{ \ j } =\eta^{ 2,3 }_{ k j }\delta^i_l - \eta^{ 2,3 }_{ l j }\delta^i_k. 
\end{equation} 
For $ \lambda =\lambda_{ 10 } $, the quotient is a riemannian manifold which is the hyperbolic equivalent of (\ref{explmet}). Introducing double polar coordinates $y^1 =r \cos \alpha $, $y^2 =r \sin \alpha $, $y^3 =\rho \cos \gamma  $, $y^4 =\rho \sin \gamma  $ we get
\begin{equation}
\label{explmetads} 
 k = - \mathrm{d} r^2 + \mathrm{d} \rho^2 + \mathrm{d} y_5^2 
+ \frac{ (1 +  \beta_2^2 \rho^2 )r^2 \mathrm{d} \alpha^2  +  (\beta_1^2 r^2 - 1 ) \rho^2 \mathrm{d} \gamma ^2 
+ 2\beta_1 \beta_2  r^2 \rho^2  \mathrm{d} \alpha \, \mathrm{d} \gamma }{ -1 +  \beta_1^2  r^2 -  \beta_2^2 \rho^2 }.
\end{equation} 
Because of the constraints $ \beta_1   > \beta_2  $, $\beta_1 R > 1 $, we have $ \beta_1 r^2 > 1 $, $-1 +  \beta_1^2  r^2 -  \beta_2^2 \rho^2>0  $.
The geometry described by (\ref{explmetads})   is that of a  2-torus fibration over the interior of $Q$, where $Q$  is the portion of hyperbolic 2-space given by
\begin{equation}
Q =\{ (r, \rho ,  y^5) \in \mathbb{R} ^3 : - r^2 + \rho^2  + y_5^2 =- R^2 , \,  \rho  \geq 0 \} ,
\end{equation} 
collapsing to a circle fibration on the boundary $\partial Q $.
 The circles are parametrised by $( \alpha , \gamma )$ and while the  $\alpha$ fibres always have non-zero length, the $\gamma$ fibres collapse on the boundary $ \rho =0 $ of $Q$.

If $\beta _1 =0 $, so that $\lambda =\lambda_4 $, it is convenient to introduce polar coordinates  in the $( y_3 , y_4 ) $-plane only. The KVF is now spacelike so we get the lorentzian metric
\begin{equation}\label{eq:new-lor-4d-metric-ads}
k =	 - \mathrm{d} y_1^2 - \mathrm{d} y_2^2  + \mathrm{d} y_5^2 + \mathrm{d} r^2  + \frac{r^2 \mathrm{d} \alpha^2}{1 + \beta^2 r^2 }
\end{equation} 
which describes a circle bundle over the interior of $S$, for $S$ ``half'' $\mathrm{AdS}_3 $, 
\begin{equation}
S = \{ (y^1 , y^2 , y^5, r ): - y_1^2 - y_2^2 + y_5^2 + r^2 =- R^2 , \, r \geq 0 \} ,
\end{equation} 
with the circle fibres always of finite length and  collapsing to zero on the boundary $r =0 $. The isometry group of $S$ is $O (2,1) $.

If $\beta _2 =0 $, so that $\lambda = \lambda_1 $,	 it is convenient to introduce polar coordinates  in the $( y_1 , y_2 ) $-plane only. The KVF is  timelike and we get the riemannian  metric
\begin{equation}\label{eq:new-riem-4d-metric-ads}
k =	 - \mathrm{d}r^2 + \mathrm{d} y_3^2  + \mathrm{d} y_4^2 + \mathrm{d} y_5 ^2  + \frac{r^2 \mathrm{d} \alpha^2}{  \beta^2 r^2-1 }
\end{equation} 
which describes a circle bundle over the interior of $S$, for $S$ ``half'' hyperbolic 3-space
\begin{equation}
S = \{ (y^3 , y^4 , y^5, r ): y_3^2 +  y_4^2 +  y_5^2 -r^2 = - R^2 , \, r \geq 0 \} ,
\end{equation} 
with the circle fibres always of finite length and  collapsing to zero on the boundary $r =0 $. The isometry group of $S$ is $O (3) $.

The hereditary isometry algebra is
\begin{equation}  
\mathfrak{l }=\frac{ N_{X_\xi }(\mathbb{R}  \times \mathfrak{so }(2,3 )) }{ \mathbb{R}   X_\xi  },
\end{equation} 
where $X_\xi $ is the Lie algebra element corresponding to the vector field $\xi $ and  $ N_{X_\xi }(\mathbb{R}  \times \mathfrak{so }(2,3 ))  $ its normaliser in $ \mathbb{R}  \times \mathfrak{so }(2,3 )$.  Taking the generators (\ref{so23gen}) and using (\ref{corrspecads}) 
 one finds the result given in Table \ref{adsgeneratorslie}.
 
\begin{table}[h]
\caption{Hereditary isometry algebra $\mathfrak{l }$  of the KK reductions of $  \mathbb{R}  \times \mathrm{AdS} ^4  $ by the KVFs of Proposition \ref{kvfads}.}
\centering
\begin{tabular}{|c|c|c|}\hline
$\lambda  $ & $\mathfrak{l } $ & generators \\ \hline
$\lambda_4 $  & $\mathfrak{so } (2 ) \oplus \mathfrak{so }(2,1 ) $& $\epsilon _{ 34 };\epsilon _{ 15 }, \epsilon _{ 25 }, \epsilon _{ 12 }$ \\\hline
$\lambda_5   $& $\mathbb{R} ^2  \ltimes \mathbb{R} ^3  $&$\epsilon _{ 14 };\epsilon _{ 25 } ; \epsilon _{ 34 }- \epsilon _{ 13 }, \epsilon _{ 24 }- \epsilon _{ 12 };\epsilon _{ 54 } -\epsilon _{ 15 }$ \\ \hline
$\lambda_6  $&$\mathfrak{co }(2,1) \ltimes \mathfrak{h}$&
$ \epsilon _{ 14 }- \epsilon _{ 23 }, \epsilon _{ 13 } + \epsilon _{ 24 } , \epsilon _{ 12} + \epsilon _{ 34 }, \epsilon _{ 14 }+ \epsilon _{ 23 };$\\ &&$\epsilon _{ 25 }+ \epsilon _{ 35 }, \epsilon _{ 15 }+ \epsilon _{ 45 }, X_{\lambda_6}$ \\ \hline
$ \lambda_{ 11 }$& $\mathfrak{so }(2) \oplus \mathfrak{so }(2,1)$&$\epsilon _{ 13 } + \epsilon _{ 24 } ; \epsilon _{ 12 } - \epsilon _{ 34 }, \epsilon _{ 13 } - \epsilon _{ 24 },\epsilon _{ 14 } + \epsilon _{ 23 }$\\\hline
$\lambda_1 $&$\mathfrak{so }(2 ) \oplus \mathfrak{so }(3 ) $&$ \epsilon _{ 12 }; \epsilon _{ 34 }, \epsilon _{ 35 }, \epsilon _{ 45 } $ \\\hline
$\lambda_8 $ & $\mathfrak{so }(2) \oplus \mathfrak{so }(2)  $& $ \epsilon _{ 12 }+ \epsilon _{ 34 }; \epsilon _{ 13 }- \epsilon _{ 24 }-2 \epsilon _{ 34 }$ \\\hline
$\lambda_{ 10 }  $& $\mathfrak{so }(2 ) \oplus \mathfrak{so }(2 )$& $\epsilon _{ 12 }; \epsilon _{ 34 }$\\\hline
$\lambda_{ 10*}$& $\mathfrak{so }(2 ) \oplus \mathfrak{so }(2,1)$& $\epsilon _{ 12 } +    \epsilon _{ 34 };\epsilon _{ 12 }-  \epsilon _{ 34 }, \epsilon _{ 13 } -    \epsilon _{ 24 }, \epsilon _{ 14 }+    \epsilon _{ 23 } $\\ \hline
\end{tabular}
\label{adsgeneratorslie}
\end{table}
The  cases of $\lambda_5 $ and $\lambda_6 $ warrant some additional discussion.

If $\lambda = \lambda_5 $  define 
\begin{equation}
N_{ i } = \epsilon _{ i 4 }- \epsilon _{ 1i } , \quad i =2,3,5 .
\end{equation}
 Then
\begin{equation} 
\begin{split} 
[ N_{ i} , N_{ j }]&= [ \epsilon _{ 14 }, \epsilon_{ 25 }] = [\epsilon_{ 25 },N_{ 3 } ]=0,\\
[\epsilon_{ 14 }, N_{ i  }] &= -N_{ i },\\
[\epsilon_{ 25 }, N_{ 2 }] &= -N_{ 5 },\\
[\epsilon_{ 25 }, N_{ 5 }] &= -N_{ 2 }.
\end{split} 
\end{equation} 
Therefore the Lie algebra has the structure  
\begin{equation}
\mathfrak{l }=\mathbb{R} ^2 
\ltimes   \mathbb{R}^3 ,
\end{equation} 
with $\mathbb{R} ^2 =\operatorname{Span}_\mathbb{R}  (\epsilon_{ 14 }, \epsilon_{ 25 } )$, $\mathbb{R}^3    =\operatorname{Span}_ \mathbb{R}  ( N_{ 2 },  N_3 , N_{ 5 } )$.

If $\lambda = \lambda_6 $ then $\mathfrak{l }$ is generated by e.g.~$\epsilon _{ 14 }, \epsilon _{ 23 }, \epsilon _{ 13 }- \epsilon _{ 34 }, \epsilon _{ 13 } + \epsilon _{ 24 } , \epsilon _{ 34 } + \epsilon _{ 12 }$, $\epsilon _{ 15 }+ \epsilon _{ 45 } $ and $ \epsilon _{ 25 }+ \epsilon _{ 35 }$. Defining
\begin{equation}
\begin{split} 
x_1& =\epsilon _{ 14 } - \epsilon _{ 23 }, \quad 
x_2 = \epsilon _{ 13 } + \epsilon _{ 24 },\quad 
x_3  = - ( \epsilon _{ 12 }+ \epsilon _{ 34 }),\\
y_1 &=\epsilon _{ 15 } + \epsilon _{ 45 }, \quad 
y_2 =\epsilon _{ 25 }+ \epsilon _{ 35 }, \quad a = -(\epsilon _{ 14 }+ \epsilon _{ 23 }), \quad b  = -X_{\lambda_6},
\end{split} 
\end{equation} 
we find
\begin{equation}
\begin{split}
[x_1 ,x_2 ] &= - 2 x_3 , \quad [ x_2 , x_3 ] = 2 x_1 , \quad [ x_1 , x_3 ] =-2 x_2 ,\\
[ y_1 , y_2 ] &= b, \quad [b, y_1 ] =[b,y_2 ] =0,\\
[x_1 , y_1 ] &= - y_1 , \quad  [ x_1 , y_2 ] = y_2 , \quad [x_2 , y_1 ]  =- y_2 , \quad [ x_2 , y_2 ] = - y_1, \\
 [x_3 , y_1 ] &=y_2 , \quad [x_3 , y_2 ] = - y_1, \quad  [ x_i,b ]=0, \\
[a, x_i ]&=0, \quad [a, y_1 ] = y_1 , \quad [a , y_2 ] =y_2 , \quad [a  , b ]= 2 b .
\end{split}
\end{equation} 
Hence  $\mathfrak{l} =\mathfrak{co }(2,1) \ltimes \mathfrak{h}  $  with $\mathfrak{h}  = \operatorname{Span}  ( y_1 , y_2 , b ) $  the 3-dimensional Heisenberg algebra, and $ \{ x_1, x_2 , x_3 ,a \} $ spanning the Lie algebra $\mathfrak{c o }(2,1)= \mathbb{R}  \oplus \mathfrak{so }(2,1 )$ of isometries and dilations of $\mathbb{R}^{ 1,2 }$.\footnote{That is the Lie algebra of the group $\{ A \in GL(3, \mathbb{R}): A^T \eta^{ 2,1 } A =c \eta^{ 2,1 }, \, c \in \mathbb{R}^\times  \}$.}
 Note that $\mathfrak{l }$ is a graded algebra with  $\{ x_i, a \} $ in degree $0$, $ \{ y_i \} $ in degree $1$, $b$ in degree 2 and $a$ acting as a grading element.

\section{Kaluza-Klein quotients of the Cahen-Wallach background}
\label{cw}

Some details on CW spaces are given in Appendix \ref{cwgener}. We recall that a 5-dimensional CW space has the topology of  $\mathbb{R} ^5 $ with global coordinates $(x^+ , x^- , x^i) $, $i =1,2,3 $, and the lorentzian metric
\begin{equation}
\label{cwmmmk} 
g =2 \mathrm{d} x^+ \mathrm{d} x^-  + \sum_{ i , j =1 }^3 A_{ ij } x^i x^j   (\mathrm{d} x^- )^2  +  | \mathrm{d} x |^2 .
\end{equation} 
Here $|\mathrm{d} x |^2 = \mathrm{d} x_1^2 + \mathrm{d} x_2^2 + \mathrm{d} x_3^2 $ and $A$ is a symmetric bilinear form on $\mathbb{R} ^3 $ which in our case is simply the Euclidean inner product,
\begin{equation} 
\label{ouraaa} 
 A = \eta^{0,3 }, 
 \end{equation}
so that
 \begin{equation}
 \label{cw5metric} 
g =2 \mathrm{d} x^+ \mathrm{d} x^-  +  |x |^2    (\mathrm{d} x^- )^2  +  | \mathrm{d} x |^2 .
\end{equation} 
The parallel vector field $\varphi$ is
\begin{equation}
\varphi =\sqrt{ 2 }\, \partial _+.
\end{equation}

\subsection{One parameters subgroups of $G$}
\label{oneparcw} 
As discussed in Appendix \ref{cwgener},  the isometry algebra of a CW space with symmetric bilinear form $A= \eta^{ 0,3 }$ is
\begin{equation} 
\label{hhhhyuti} 
\mathfrak{g}  \rtimes \mathfrak{so} (3) 
\end{equation} 
where  $\mathfrak{so } (3) $ is the Lie algebra of the isometry group of $\eta^{ 0,3 }$, whose generators $ (V_i )$ satisfy
\begin{equation}
[ V_i, V_j ] =-\epsilon_{ ijk }V_k ,
\end{equation} 
and $\mathfrak{g}  $ is the 8-dimensional Lie algebra with generators $ (e_i , e_i^\ast , e_+ , e_- )$, $i =1,2,3 $, and non-trivial brackets
\begin{equation}
\label{cwal8} 
[ e_- , e_i ] = e_i^\ast , \quad [ e_- , e_i^\ast ]=  e_i , \quad [ e_i^\ast , e_j ] =  \delta _{ ij }e_+ .
\end{equation} 
The action of $\mathfrak{so} (3) $ on $\mathfrak{g}$  in (\ref{hhhhyuti})   is the natural action of $\mathfrak{so }(3) $ on $\mathbb{R} ^3 $ on  $\operatorname{Span} (e_i )  $,   the adjoint action on $ \operatorname{Span} (e_i^\ast ) $, and the trivial action on $ \operatorname{Span} (e_+ , e_- )$,
\begin{equation}
[ V_i , e_j ] =  V_i e_j =- \epsilon_{ ijk }e_k, \quad [ V_i , e^\ast_j ] =  V_i e^\ast _j = -\epsilon_{ ijk }e^\ast_k, \quad [ V_i , e_\pm ]=0.
\end{equation}

Note that $(e_i , e_i^\ast , e_+ )$ form a representation of the 7-dimensional Heisenberg algebra. An explicit matrix representation of (\ref{cwal8}) is given by
\begin{equation}
\label{rephe8} 
\begin{split} 
e_1& =\begin{pmatrix}
0 &1 &0 &0 & 0 \\
0 &0 &0 &0 &-1\\
0 &0 &0 &0 & 0 \\
0 &0 &0 &0 & 0 \\
0 &0 &0 &0 & 0
\end{pmatrix} , \quad 
e_2 =\begin{pmatrix}
0 &0 &1 &0 & 0 \\
0 &0 &0 &0 &0\\
0 &0 &0 &0 & -1 \\
0 &0 &0 &0 & 0 \\
0 &0 &0 &0 & 0
\end{pmatrix} ,\quad 
e_3 =\begin{pmatrix}
0 &0&0 &1 & 0 \\
0 &0 &0 &0 &0\\
0 &0 &0 &0 & 0 \\
0 &0 &0 &0 &-1 \\
0 &0 &0 &0 & 0
\end{pmatrix} ,\\
e_1^\ast  &=\begin{pmatrix}
0 &1 &0 &0 & 0 \\
0 &0 &0 &0 &1\\
0 &0 &0 &0 & 0 \\
0 &0 &0 &0 & 0 \\
0 &0 &0 &0 & 0
\end{pmatrix} , \quad 
e_2^\ast  =\begin{pmatrix}
0 &0 &1 &0 & 0 \\
0 &0 &0 &0 &0\\
0 &0 &0 &0 & 1 \\
0 &0 &0 &0 & 0 \\
0 &0 &0 &0 & 0
\end{pmatrix} , \quad 
e_3^\ast  =\begin{pmatrix}
0 &0 &0 &1 & 0 \\
0 &0 &0 &0 &0\\
0 &0 &0 &0 &0 \\
0 &0 &0 &0 & 1 \\
0 &0 &0 &0 & 0
\end{pmatrix} , \quad \\
e_-  &=\begin{pmatrix}
1 &0 &0 &0 & 0 \\
0 &0 &0 &0 &0\\
0 &0 &0 &0 &0 \\
0 &0 &0 &0 & 0 \\
0 &0 &0 &0 & 1
\end{pmatrix} , \quad
e_+ =\begin{pmatrix}
0 &0 &0 &0 &- 2 \\
0 &0 &0 &0 &0\\
0 &0 &0 &0 &0 \\
0 &0 &0 &0 & 0 \\
0 &0 &0 &0 & 0
\end{pmatrix},\\
V_1 &= \begin{pmatrix}
0 &0 &0 &0 &0 \\
0 &0 &0 &0 &0\\
0 &0 &0 &1 &0 \\
0 &0 &-1 &0 & 0 \\
0 &0 &0 &0 & 0
\end{pmatrix}, \quad
V_2 = \begin{pmatrix}
0 &0 &0 &0 &0 \\
0 &0 &0 &-1 &0\\
0 &0 &0 &0 &0 \\
0 &1 &0 &0 & 0 \\
0 &0 &0 &0 & 0
\end{pmatrix}, \quad
V_3 = \begin{pmatrix}
0 &0 &0 &0 &0 \\
0 &0 &1 &0 &0\\
0 &-1 &0 &0 &0 \\
0 &0 &0 &0 & 0 \\
0 &0 &0 &0 & 0
\end{pmatrix}.
\end{split} 
\end{equation} 

As shown in Appendix \ref{cwkvfasdlk}, in terms of the coordinates $(x^i , x^+ , x^- )$ used in (\ref{cwmmmk}),  the KVFs corresponding to $(e_i , e_i^\ast , e_+ , e_- ) $ are
\begin{align} 
\label{xiepm} 
\xi_{ e_ \pm } &= \partial_{ \pm },\\
\label{xiei} 
\xi_{ e_i }&= \cosh (x^- ) \partial_i -  \sinh (x^- )  x^i  \partial_+  ,\\
\label{xieis} 
\xi_{ e_i^\ast } &
= \cosh (x^- ) x^i  \partial_+ -  \sinh (x^- ) \partial_i .
\end{align} 
We choose the generators $(V_1 , V_2 , V_3 )$ of $\mathfrak{so }(3) $ so that the associated KVFs are the usual  generators of rotations in $\mathbb{R} ^{ 0,3 }$,
\begin{equation}
\label{xivi} 
\xi_{ V_1 } = R_{ 23 }, \quad \xi_{ V_2 } = R_{ 31 }, \quad \xi_{ V_3 } = R_{ 12 }.
\end{equation}

\begin{proposition} 
\label{kvfcw} 
Let $\xi$ be a KVF of $M = \mathrm{CW}_5  $ with the metric (\ref{cw5metric})  and assume that $|\xi | $ never vanishes. Then there are coordinates such that, up to rescaling, $\xi $ is the KVF  associated to one of the following elements  $X_i \in \mathfrak{g}\rtimes \mathfrak{so }(3) $,
\begin{align}
\label{1case1}  X_1  &= e_- + b  V_3  +  \gamma e_+ , \quad   \gamma >0, \\
\label{2case2}  X_2 &=  V_3 + ce_3   , \, c\neq 0,\\
\label{4case4} X_4^\pm  & = V_3 + c (e_3 \pm e_3^\ast ), \quad  c \neq 0,\\
\label{6case6} X_6 &= e_3, \\
\label{8case8} X_8^\pm &= e_3 \pm e_3^\ast ,\\
\label{9case9} X_9  &= e_3 + d_3 e_3^\ast + d_1 e_1^\ast , \quad d_1 \neq 0.
\end{align} 
The KVFs corresponding to (\ref{1case1})--(\ref{9case9}) are all spacelike.
\end{proposition}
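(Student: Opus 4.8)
The plan is to follow the methodology of Section~\ref{setup}: classify lines in $\g\rtimes\so(3)$ up to the equivalence $X\sim\lambda\,\mathrm{Ad}_gX$ with $\lambda\in\RR^\times$, $g\in G$, and then retain only those whose associated KVF is nowhere null. Write a general element as $X=\alpha^-e_-+\alpha^+e_++a^ie_i+b^ie_i^\ast+\omega$ with $\omega\in\so(3)$. The first observation is that the $e_-$-component is an adjoint invariant: inspecting the brackets (\ref{cwal8}) and the $\so(3)$-action one sees that no bracket has a component along $e_-$, so $\mathrm{Ad}_g$ fixes $\alpha^-$ and it can only be rescaled. This splits the problem into $\alpha^-\neq0$ and $\alpha^-=0$. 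In either case I first use $SO(3)$ to bring $\omega$ to the normal form $\omega=b\,V_3$; this simultaneously rotates the vectors $(a^i)$ and $(b^i)$, fixes the axis directions $e_3,e_3^\ast$, and rotates the $(e_1,e_2)$- and $(e_1^\ast,e_2^\ast)$-planes.

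If $\alpha^-\neq0$ I rescale to $\alpha^-=1$. A direct computation shows that on $\mathrm{Span}(e_i,e_i^\ast)$ the operator $\mathrm{ad}_{e_-}+b\,\mathrm{ad}_{V_3}$ has eigenvalues $\pm1$ on the axis pair and $\pm1\pm\mathrm{i}b$ on the off-axis planes, hence is invertible; conjugating by a suitable $\exp(Z)$ with $Z\in\mathrm{Span}(e_i,e_i^\ast)$ (the relevant $\mathrm{ad}_Z$ is nilpotent, so the series terminates) removes the entire vector part, leaving $X=e_-+b\,V_3+\gamma e_+$. Since $e_+$ is central it cannot be eliminated, so $b$ and $\gamma$ survive as moduli; this is $X_1$. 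When $\alpha^-=0$ but $\omega=V_3\neq0$ (rescaled), invertibility of $\mathrm{ad}_{V_3}$ on the off-axis planes lets me delete the $(e_1,e_2,e_1^\ast,e_2^\ast)$-part of the vector, reducing $X$ to $V_3+a^3e_3+b^3e_3^\ast+\alpha^+e_+$. On the axis $\exp(te_-)$ acts as a boost $(a^3,b^3)\mapsto(a^3\cosh t+b^3\sinh t,\;a^3\sinh t+b^3\cosh t)$ with invariant $(a^3)^2-(b^3)^2$, while $\exp(se_3)$ or $\exp(se_3^\ast)$ shift $\alpha^+$ (since $[e_3^\ast,e_3]=e_+$). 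This yields $V_3+c\,e_3$ when $(a^3)^2>(b^3)^2$, $V_3+c(e_3\pm e_3^\ast)$ when $(a^3)^2=(b^3)^2\neq0$, and the degenerate forms $V_3+c\,e_3^\ast$ and $V_3$; the first two are $X_2$ and $X_4^\pm$ and the last two are discarded below.

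The heart of the argument is the case $\alpha^-=0$, $\omega=0$, where $X=\alpha^+e_++a^ie_i+b^ie_i^\ast$ and the full $O(3)$, the boost $\exp(\RR e_-)$, rescaling and the $\alpha^+$-shifts remain available. Here I work with the pair $a,b\in\RR^3$: $O(3)$ rotates them simultaneously, the boost sends $a\pm b\mapsto e^{\pm t}(a\pm b)$, and whenever $a$ or $b$ is nonzero $\alpha^+$ can be gauged away. If $a,b$ are linearly independent I rotate $a$ onto the axis, rotate about the axis to place $b$ in the $(e_1,e_3)$-plane, and rescale $|a|=1$, reaching $e_3+d_3e_3^\ast+d_1e_1^\ast$ with $d_1\neq0$, i.e.\ $X_9$. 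If $a,b$ are dependent, the boost-invariant sign of $(a^3)^2-(b^3)^2$ gives $e_3$ (strict), $e_3\pm e_3^\ast$ (equality, nonzero), or the degenerate $e_3^\ast$; these are $X_6$ and $X_8^\pm$. The main obstacle is precisely this simultaneous normalisation of $(a,b)$ under $O(3)\times\mathrm{boost}$ together with the bookkeeping of degenerate limits; the discarded forms account for the gaps in the indexing.

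It remains to impose $|\xi|\neq0$ and read off the causal character. Using (\ref{xiepm})--(\ref{xivi}), the KVF of $\alpha^+e_++a^ie_i+b^ie_i^\ast$ satisfies
\begin{equation}
|\xi|^2=\sum_{i=1}^3\bigl(a^i\cosh x^- -b^i\sinh x^-\bigr)^2=\lvert a\cosh x^- -b\sinh x^-\rvert^2,
\end{equation}
which is $\geq0$ and vanishes exactly when $a\cosh x^-=b\sinh x^-$; this is impossible for all $x^-$ iff $a,b$ are independent or $b=\mu a$ with $a\neq0$ and $|\mu|\leq1$. Hence $e_3^\ast$ (and likewise $V_3+c\,e_3^\ast$ and $V_3$ alone, whose norms $\sinh^2x^-$, $x_1^2+x_2^2+c^2\sinh^2x^-$, $x_1^2+x_2^2$ all vanish somewhere) are eliminated, while $X_6,X_8^\pm,X_9$ survive and are spacelike. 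For $X_1$ one computes $|\xi|^2=|x|^2+b^2(x_1^2+x_2^2)+2\gamma$, nowhere zero precisely for $\gamma>0$ and then positive; for $X_2,X_4^\pm$ the rotation contributes $x_1^2+x_2^2$ and the $e_3$-part contributes $c^2\cosh^2x^-$ (resp.\ $c^2e^{\mp2x^-}$), again positive and forcing $c\neq0$. This establishes the list and that every surviving KVF is spacelike.
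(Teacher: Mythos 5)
Your proposal is correct and follows essentially the same route as the paper: the same case split on the $e_-$-component and the $\so(3)$-component, normalisation by rotations, Heisenberg conjugations and the $e_-$-boost, and the same final norm computations that discard $V_3+d\,e_3^\ast$, $V_3+\gamma e_+$, $e_3^\ast$, $e_+$ and impose $\gamma>0$ and $c\neq 0$. The differences are only cosmetic: you kill the vector part by inverting $\mathrm{ad}_{e_-}+b\,\mathrm{ad}_{V_3}$ where the paper solves the equivalent cross-product system, and your unified norm formula $\lvert a\cosh x^- - b\sinh x^-\rvert^2$ replaces the paper's case-by-case list (note only that your degenerate form ``$V_3$'' should strictly be $V_3+\gamma e_+$, whose KVF has the same norm $x_1^2+x_2^2$, so the conclusion is unaffected).
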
 
\begin{proof}
Let $X $ be a generic element of $  \mathfrak{g}   \rtimes \mathfrak{so }(3) $,
\begin{equation}
X = \alpha e_- + b_i V_i + c_i e_i + d_i e_i^\ast + \gamma e_+ .
\end{equation} 
We now act by conjugation using the equations given in Section \ref{conjac} to simplify the form of $X$ as much as possible.

Suppose first $\alpha \neq 0 $ and rescale so that $\alpha =1 $. Conjugating first by $ \exp ( x_k e_k^\ast  )$ and then by $ \exp (y_k e_k )$  brings $X$ to the form
\begin{equation}
X \mapsto e_- + b_i V_i + (c + b \times y -x )_i e_i + (d + b \times x -y )_i e_i^\ast + \tilde \gamma e_+ 
\end{equation} 
where we do not need the explicit expression of $\tilde \gamma $. Imposing the coefficients of $e_i $ and $e_i^\ast $ to vanish we thus get the vectorial equations
\begin{align}
x &=c + b \times  y,\\
y &= d+ b \times x,
\end{align} 
which are solved by taking
\begin{equation}
\begin{split}
x &= \frac{c+ b \times d + (b \cdot c )b}{1 + |b |^2 }, \qquad 
y = \frac{d+ b \times c + (b \cdot d )b}{1 + |b |^2 }.
\end{split} 
\end{equation} 
Rotating (conjugation by $V_i $) so to align $V$ to the third direction and relabelling  the coefficients we get
\begin{equation}
\label{case1} 
X \mapsto e_- + b V_3 + \gamma e_+ .
\end{equation} 

Suppose now $\alpha =0 $, $b \neq 0 $. Then the same procedure as before gives 
\begin{equation*}
X \mapsto  b_i V_i + (c + b \times y  )_i e_i + (d + b \times x  )_i e_i^\ast + \tilde \gamma e_+ .
\end{equation*} 
In this case the equation
\begin{equation*}
c + b \times y =0
\end{equation*} 
has solution if and only if $b \cdot c =0 $, in which case $y = \tfrac{b \times c}{|b |^2 } $. Similarly $ d+ x \times b =0 $ has solution if and only if $b \cdot d =0 $, in which case $x = \tfrac{b \times d}{|b |^2 } $. Therefore we can only kill the part of $c_i e_i $ or $d_ie_i^\ast $ which is normal to $b_i V_i $. Unless $c =d=0 $ we can still conjugate by $x_k e_k $ or $x_k e_k^\ast $ with $x$ such that $x \times  b =0 $ so to kill the $e_+ $ part while not affecting the $e_i $, $e_i^\ast $ part. Rescaling and rotating we thus get
\begin{equation*}
X \mapsto V_3 +  c e_3 + de_3^\ast .
\end{equation*} 
We now conjugate by $\exp( x^- e_- )$ obtaining
\begin{equation}
\label{conjacasy}
X \mapsto V_3 +( c \cosh x^-  +d \sinh x^-) e_3 + (c\sinh x^- + d\cosh x^-) e_3^\ast .
\end{equation} 
Since $\tanh: \mathbb{R}  \rightarrow (-1,1 )$ is surjective, if $ |c/d| <1 $ we can kill the coefficient of $e_3 $ and if $ |d/c|<1 $ we can kill that of $e_3^\ast $. Therefore if $ \alpha =0 $, $b \neq 0 $ then  $X$ can be brought to one of the following forms
\begin{equation}
X = V_3 + \gamma e_+ , \quad X = V_3 + ce_3 , \quad X =V_3 + de_3^\ast , \quad X = V_3 + c ( e_3 \pm e_3^\ast ).
\end{equation} 

If $\alpha =0 =b $ but $c,d$ are not both zero we can eliminate the $e_+ $ part. If $c =0 $ then we can rotate and rescale so to get
\begin{equation}
\label{simpe3} 
X =e_3^\ast.
\end{equation} 
If $c \neq 0 $ rotating and rescaling we obtain
\begin{equation}
\label{cdneq0} 
X = e_3 +  d_3 e_3^\ast + d_1 e_1^\ast.
\end{equation} 
If in addition $d_1 =0 $, $d_3 \neq \pm 1 $  then acting with $e_- $ we can bring (\ref{cdneq0})  to the form (\ref{simpe3}) or to
\begin{equation}
X =e_3 .
\end{equation} 

Finally if $\alpha, b, c, d$ all vanish then we rescale to get
\begin{equation}
X =e_+ .
\end{equation} 

Thus, up to conjugation and rescaling,  the possible forms of $X$ are
\begin{align}
\label{1case} 
X_1  &= e_- + b  V_3 +  \gamma e_+ , \\
\label{2case} 
X_2 &= V_3 + c e_3  ,\quad  c \neq 0,\\
\label{3case} 
X_3 &= V_3 + d e_3^\ast , \quad d\neq 0,\\
\label{4case} 
X_4^\pm &= V_3 + c (e_3 \pm e_3^\ast ),\quad  c\neq 0,\\
\label{5case} 
X_5 &= V_3 + \gamma e_+ ,\\
\label{6case} 
X_6  &=e_3 ,\\
\label{7case} 
X_7 &= e_3^\ast ,\\
\label{8case} 
X_8^\pm  &= e_3 \pm e_3^\ast ,\\
\label{9case} 
X_9  &= e_3 + d_3 e_3^\ast + d_1 e_1^\ast , \quad d_1 \neq 0\\
\label{10case} 
X_{ 10 } &= e_+ ,
\end{align}
where any parameter can vanish unless otherwise specified.

Equations (\ref{xiepm}) -- (\ref{xivi}) give the KVF associated to  $X_i \in \mathfrak{g}  \rtimes \mathfrak{so }(3) $.  Using  (\ref{kvfprodddddd}) to compute their norms we find
\begin{equation*}
\begin{split} 
|\xi_{ X_1 }|^2 &=(1 + b^2 ) (x_1^2 + x_2 ^2)+ x_3^2  + 2 \gamma \geq 2 \gamma ,\\
|\xi_{ X_2}|^2 &= x_1^2 + x_2 ^2 + c^2 \cosh^2 x^->0 ,\\
|\xi_{ X_3}|^2 &= x_1^2 + x_2^2 + d^2 \sinh^2 x^- ,\\
|\xi_{ X_4^\pm }|^2 &= x_1^2 + x_2^2  + c^2 \mathrm{e}^{ \mp 2x^- }>0,\\
|\xi_{ X_5}|^2 &=  x_1^2 + x_2^2  ,\\
|\xi_{ X_6}|^2 &= \cosh^2 x^- >0,\\
|\xi_{ X_7}|^2 &= \sinh^2 x^- ,\\
|\xi_{ X_8^\pm }|^2 &=\mathrm{e}^{ \mp 2x^- }>0,\\
|\xi_{ X_9}|^2 &= (\cosh x^- - d_3 \sinh x^- )^2 + d_1^2  \sinh^2 x^- >0,\\
|\xi_{ X_{10}}|^2 &=0.
\end{split}
\end{equation*} 
The KVFs $X_3 $, $X_5 $, $X_7 $, $X_{ 10 }$ have zeros and need to be excluded, while in the case of $X_1 $ we need to impose $\gamma >0 $.
\end{proof} 
Making use of the representation  (\ref{rephe8}) it can be checked that the one-parameter groups associated to (\ref{1case1})--(\ref{9case9}) are all non-compact and thus have the topology of a line.

\subsection{Preserved SUSY}

\begin{proposition}\label{psusycw}
The KVFs of Proposition \ref{kvfcw} preserving a fraction $\nu>0$  of SUSY are given by Table \ref{frusycw}.
\begin{table}[htp]
\caption{Fraction $\nu>0$ of SUSY preserved by the KVFs of Proposition \ref{kvfcw}.}
\centering
\begin{tabular}{|c|c|c|}
\hline 
$X \in \mathfrak{g}  \rtimes \mathfrak{so }(3) $ & Condition & $\nu$  \\ \hline
$X_9 $ & - & $1/2 $\\ \hline
$ X_6 $ & - & $1/2 $\\\hline
$ X_8^\pm  $ & - & $1/2 $\\\hline
$X_1 $ & $4b^2 =9 \text{ or } 4b^2 =1$& $1/4 $\\\hline
\end{tabular}
\label{frusycw}
\end{table}
\end{proposition}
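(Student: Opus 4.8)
The plan is to proceed exactly as in the proofs of Propositions~\ref{psusys4} and~\ref{psusyads4}: for each KVF $\xi$ of Proposition~\ref{kvfcw} I impose the invariance condition~(\ref{lxie}) on the eight-dimensional space $S$ of Killing spinors and compute $\nu = \tfrac18 \dim(\operatorname{Ker} L_\xi|_S)$, so that $\nu = \tfrac12$ corresponds to a four-dimensional kernel and $\nu = \tfrac14$ to a two-dimensional one. The key simplification, valid here as before, is that $L_\xi$ sends Killing spinors to Killing spinors; since a Killing spinor is determined by its value at a single point, it suffices to evaluate~(\ref{lxie}) at one conveniently chosen $o$. With the choice~(\ref{rchoice}) of $r$, the equation $L_\xi \epsilon = \beta_\xi \epsilon + \tfrac14 \mathrm{d}\xi^\flat \cdot \epsilon = 0$ splits into two blocks acting on $\epsilon_1$ and $\epsilon_2$, related by $i \mapsto -i$ and hence of equal rank, so it is enough to analyse one block and double the kernel dimension.

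First I would write out, for each $X_i$, the coordinate form of the associated KVF by combining~(\ref{xiepm})--(\ref{xivi}), and then evaluate the two ingredients of~(\ref{lxie}) at a point $o$ on the line $x^- = 0$, $x = 0$. There $\cosh x^- = 1$, $\sinh x^- = 0$, and the metric~(\ref{cw5metric}) is flat, so $\beta_\xi|_o$ follows immediately from~(\ref{xicontr}) with $\varphi = \sqrt2\,\partial_+$. The term $\tfrac14 \mathrm{d}\xi^\flat|_o$ requires lowering the index with the CW metric and differentiating; the $x$-dependence of $|x|^2(\mathrm{d}x^-)^2$ and of the hyperbolic functions in~(\ref{xiei})--(\ref{xieis}) is precisely what feeds the parameters $b, c, d_1, d_3$ into $\mathrm{d}\xi^\flat$ even though several of the KVFs themselves vanish at $o$.

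Having assembled $L_\xi|_o$ as an endomorphism of $\Sigma$, I would compute the rank of its two blocks using the explicit $\Cl(1,4)$ representation of Appendix~\ref{gmatr}. I expect the blocks for $X_6$, $X_8^\pm$ and $X_9$ to have a kernel of dimension two \emph{independently} of the parameters, yielding $\nu = \tfrac12$, and the blocks for $X_2$ and $X_4^\pm$ to be invertible, so that no supersymmetry survives and they do not appear in Table~\ref{frusycw}. For $X_1 = e_- + bV_3 + \gamma e_+$ the rank should depend on $b$: writing the relevant block as an operator built from $\varphi\cdot$ plus $\tfrac{b}{2}$ times a transverse generator, the kernel condition becomes a quadratic discriminant in $b$ whose two vanishing loci are $4b^2 = 9$ and $4b^2 = 1$, each then producing a one-dimensional kernel per block and hence $\nu = \tfrac14$.

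The hard part will be the Clifford algebra in the null background. Because $\varphi = \sqrt2\,\partial_+$ is null, the Clifford element $\varphi\cdot$ is nilpotent and $\beta_\xi$ is not semisimple, so---unlike the spacelike and timelike cases of $\mathbb{R}\times\AdS_4$ and $-\mathbb{R}\times S^4$---the kernel cannot be read off from an eigenvalue count alone. The operator $\gamma_+\gamma_-$ behaves like a projector rather than a reflection, and one must track the Jordan filtration it induces on $\Sigma$ to count kernels honestly; in particular the two admissible values of $b$ for $X_1$ should correspond to the two eigenvalues of this projector-like operator. Once $\beta_\xi|_o$ and $\mathrm{d}\xi^\flat|_o$ are in hand, the remaining work is a finite and explicit linear-algebra computation with the matrices of Appendix~\ref{gmatr}, carried out uniformly across the six non-excluded KVFs.
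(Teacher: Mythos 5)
Your proposal follows essentially the same route as the paper's proof: evaluate $L_\xi$ at a point on the $x^+$-axis where the coordinate frame is null-orthonormal, use that Killing spinors are determined by their value at one point, split the equation into the two $\pm i$ blocks coming from the choice~(\ref{rchoice}) of $r$, and determine kernel dimensions by explicit rank computations with the gamma matrices of Appendix~\ref{gmatr} --- including the correct observation that the null (nilpotent) character of $\varphi\cdot$ forces a rank count rather than a pure eigenvalue count, and that the two values $4b^2=9$, $4b^2=1$ for $X_1$ arise from the two eigenvalues of the projector-like operator $\gamma_-\gamma_+$. Your handling of $X_1$ by simultaneously diagonalising the commuting pair $\gamma_-\gamma_+$, $\gamma_1\gamma_2$ is a cosmetic variant of the paper's trick of left-multiplying by $\gamma_-$ and splitting into cases, so the two arguments are equivalent in substance.
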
 
\begin{proof}
Let us work at the point with coordinates, $x^i =x^\pm =0 $, where
\begin{equation}
\begin{split} 
\xi_{ e _\pm }&= \partial_\pm , \quad \mathrm{d} \xi ^\flat_{ e_ \pm } =0,\\
\xi_{ e_i } &= \partial_i , \quad \mathrm{d} \xi ^\flat_{ e_ i} =0, \\
\xi_{ e_i^\ast }&=0, \quad \mathrm{d} \xi ^\flat_{ e^\ast_ i} =2 \mathrm{d} x^i \wedge \mathrm{d} x^- ,\\
\xi_{V_i} &=0, \quad \mathrm{d} \xi_{V_i}^\flat = \epsilon_{ ijk }\mathrm{d} x_j \wedge \mathrm{d} x_k .
\end{split} 
\end{equation} 
Since $\varphi =\partial_+ $,  the $\beta$-term contribution to $L_\xi \epsilon $ is
\begin{equation}
\beta_\xi \epsilon = \frac{1}{4} \xi^i (\gamma_i \gamma_+ + 3 g_{ i + } )r \epsilon,
\end{equation} 
which is  non-zero only for $\xi_{e_i}$ and $\xi_-$,
\begin{equation}
\begin{split} 
\beta_{ \xi_{ e_i } } \epsilon &= \frac{1}{4} \gamma_i \gamma_+r \epsilon  ,\quad 
\beta_{ \xi_- } \epsilon = \frac{1}{4} (3 + \gamma_- \gamma_+ )r \epsilon .
\end{split} 
\end{equation} 
The other contribution comes from $\tfrac{1}{4} (\mathrm{d} \xi^\flat ) \cdot \epsilon  $, which gives
\begin{equation}
\begin{split} 
\frac{1}{4} \mathrm{d} \xi_{ e_i^\ast }^\flat \cdot \epsilon &
=\frac{1}{2} \gamma^i  \gamma^- \epsilon ,\\
\frac{1}{4} \mathrm{d}  \xi_{V_i} ^\flat \cdot \epsilon &
=\frac{\epsilon_{ i jk }}{4} \gamma^j  \gamma^k \epsilon .
\end{split} 
\end{equation} 
Putting all together
\begin{align}
L_{ \xi _{ e_+} } \epsilon &=0 ,\\
L_{ \xi _{ e_-} } \epsilon &= \frac{1}{4} (3 + \gamma_- \gamma_+ )r \epsilon ,\\ 
L_{ \xi _{ e_i} } \epsilon &=\frac{1}{4} \gamma_i \gamma_+ r \epsilon  ,\\
L_{ \xi _{ e^\ast _i} } \epsilon &
=\frac{1}{2} \gamma_i \gamma_+ \epsilon ,\\
L_{ \xi_{V_i} } \epsilon &=\frac{1}{4} \epsilon_{ ijk } \gamma_j \gamma_k \epsilon .
\end{align} 

For (\ref{1case1}) we have
\begin{equation}
\label{asdqweads} 
L_{ \xi_{X_1 } } \epsilon =0 \Leftrightarrow \left[ (3 + \gamma_-  \gamma_+  )r + 2b \gamma_1 \gamma_2 \right] \epsilon =0.
\end{equation} 
 Left multiplying by $\gamma_- $ gives the necessary condition
\begin{equation}
(3r + 2b \gamma_1 \gamma_2 )\gamma_- \epsilon =0.
\end{equation} 
Taking $ \epsilon_1, \epsilon_2   \in \operatorname{Ker} \gamma_- $  and substituting in (\ref{asdqweads}) gives
\begin{equation}
(r + 2b \gamma_1 \gamma_2 ) \epsilon =0
\end{equation} 
which has non-trivial solutions if and only if $b^2 =1/4 $, in which case we need 
\begin{equation}
\label{ccond1} 
\epsilon_1 \in \operatorname{Ker} \gamma_- \cap \operatorname{Ker} (i +2 b \gamma_1 \gamma_2 ),\quad 
\epsilon_2 \in \operatorname{Ker} \gamma_- \cap \operatorname{Ker} (-i + 2b \gamma_1 \gamma_2 ),
\end{equation} 
and $ \operatorname{Ker} \gamma _- \cap \operatorname{Ker} (\pm i  +  2b \gamma _1 \gamma _2 )$ is 1-dimensional, so $ \nu =\tfrac{1}{4}$. Taking instead $(3r + 2b \gamma_1 \gamma_2 ) \epsilon =0 $, which has non-trivial solutions if and only if $b^2 =9/4 $, and substituting in (\ref{asdqweads}) gives $  \gamma _- \gamma_+ \epsilon_i  =0 $ which has non-trivial solutions. So if $b^2 =9/4 $ we need  
\begin{equation}
\label{ccond2} 
 \epsilon_1  \in \operatorname{Ker} (3i +2 b \gamma_1 \gamma_2 ) \cap \operatorname{Ker} ( \gamma_- \gamma_+ ), \quad 
\epsilon_2 \in \operatorname{Ker} (-3i + 2b \gamma_1 \gamma_2 ) \cap \operatorname{Ker} ( \gamma_- \gamma_+ ),
 \end{equation}
and $\operatorname{Ker} (\pm 3i + 2b \gamma_1 \gamma_2 ) \cap \operatorname{Ker} ( \gamma_- \gamma_+ )$   is 1-dimensional, so again $\nu =\tfrac{1}{4}$.

A similar reasoning shows that (\ref{2case2}), (\ref{4case4})  preserve no SUSY and (\ref{6case6}), (\ref{8case8}) require $\epsilon_1 , \epsilon _2  \in \operatorname{Ker} \gamma_+   $ which is 2-dimensional, so $\nu =\tfrac{1}{2} $.
Finally (\ref{9case9}) gives
\begin{equation}
\label{hgtre43} 
(\gamma_3 r + 2d_3 \gamma_3 + 2d_1 \gamma_1) \gamma_+  \epsilon =0
\end{equation} 
which has 2-dimensional kernel, so $\nu =\tfrac{1}{2} $.
\end{proof}

\subsection{Geometry of the quotient}
All the KVFs of Proposition \ref{kvfcw} are spacelike with orbit homeomorphic to a line, so the quotient $M /\Gamma $ is in all cases a lorentzian 4-manifold with the topology of $\mathbb{R} ^4 $.

In order to determine the quotient metric we proceed as follows. Let $\xi$ be the KVF generating $\Gamma$, and pick a basis $ ( \chi_1 , \chi_2 , \chi_3 , \chi_4 )$ for $\xi^\perp $. Then for a KK geometry we have
\begin{equation}
\label{metrquotcomp} 
g = \frac{ \xi^\flat \otimes \xi^\flat }{g (\xi , \xi )} + \sum_{ i =1 }^4  C_{ ij }\chi ^\flat_i \odot \chi^\flat _j ,
\end{equation} 
where the coefficients $ C_{ ij } = C_{ ji } $ need to be determined, and the quotient metric $h$ is 
\begin{equation}
h =\sum_{ i =1 }^4  C_{ ij }\chi ^\flat_i \odot \chi^\flat _j 
\end{equation} 
provided that we re-express  $C_{ ij } $, $\chi_i^\flat $ in terms of coordinates $\tilde x_i $ well-defined on the quotient, i.e.~such that $\xi (\tilde x_i ) =0 $.
We carry out this procedure explicitly for the cases  $ X_1 $, $b =0 $, $X_6 $, $X_8^\pm, X_9  $ which have a larger isometry group, see Table \ref{geoquotcw} below.

If $b =0 $, $X_1 =  e_-  + \gamma e_+ $,
\begin{equation}
\xi_{ X_1 }= \partial_- + \gamma \partial_+ ,
\end{equation}
and
 $\xi^\perp = \operatorname{Span} (\partial_1 , \partial_2 , \partial_3 , \chi   )$, with
\begin{equation}
\chi =\partial_- - (\gamma + |x |^2 )\partial_+.
\end{equation} 
We find
\begin{equation}
\label{met1b0} 
h 
= \mathrm{d} x_1^2 + \mathrm{d} x_2^2 + \mathrm{d} x_3^2 - \frac{ \mathrm{d} u^2 }{2 \gamma + |x |^2 }  ,
\end{equation} 
with $\mathrm{d} u = \chi^\flat $, 
\begin{equation}
u =x^+ - \gamma  x^-.
\end{equation} 
Note that $ \xi (u) =\xi (x^1 )=\xi (x^2 )=\xi (x^3 )=0 $ so $(u , x^i )$ are well-defined coordinates on the quotient.
The isometry group of (\ref{met1b0}) is $O (3) \times \mathbb{R}  $.

For $X_6 =e_3 $,  
\begin{equation}
\xi_{ X_6 } = \cosh (x^- ) \partial_3  - x_3 \sinh (x^- ) \partial_+ ,
\end{equation}
 and
$ \xi^\perp = \operatorname{Span} (\partial_+ , \partial_1 , \partial_2 , \chi )$ with
\begin{equation} 
\chi =x_3 \sinh (x^- ) \partial_3 +  \cosh (x^- ) \partial_- .
\end{equation} 
We find
\begin{equation}
h =  \mathrm{d} x_1^2 + \mathrm{d} x_2^2 +  \mathrm{d} x^-  \left( C_1  \mathrm{d} x^-  + C_2  \chi^\flat \right) ,
\end{equation} 
\begin{equation}
C_1 =- |x |^2 - x_3^2  \tanh^2 (x^- ), \quad 
C_2 =\frac{2}{\cosh (x^- )}.
\end{equation} 
We can rewrite $h$ in terms of coordinates $(x_1 , x_2 ,  x^- , \tilde x^+ )$ well-defined on the quotient as
\begin{equation}\label{eq:nw-metric-1}
h = 2\mathrm{d} x^-  \mathrm{d}  \tilde x^+ + ( x_1^2 + x_2^2  ) (\mathrm{d} x^- )^2 +   \mathrm{d} x_1^2 + \mathrm{d} x_2^2,
\end{equation} 
where
\begin{equation}
\tilde x^+ =x^+ + \tfrac{x_3^2}{2} \tanh (x^- ).
\end{equation}

For $ X_8^\pm  = e_3\pm e_3^\ast $, 
\begin{equation} 
\xi_{ X_8^\pm}= \mathrm{e}^{\mp x^- }(\partial_3  \pm x_3 \partial_+ ) ,
\end{equation}
 and
$ \xi^\perp = \operatorname{Span} (\partial_+ , \partial_1 , \partial_2 , \chi )$ with
\begin{equation}
\chi = x_3 \partial_3\mp \partial_- .
\end{equation} 
We find
\begin{equation}
h = \mathrm{d} x_1^2 + \mathrm{d} x_2^2\mp 2  \mathrm{d} x^- \odot \chi^\flat  -  (x_1^2 + x_2^2 + 2 x_3^2 ) (\mathrm{d} x^- )^2 ,
\end{equation} 
which in terms of coordinates $(x^1 , x^2 , x^- , \tilde x^+ )$ well-defined on the quotient becomes
\begin{equation}\label{eq:nw-metric-2}
 h =2 \mathrm{d} \tilde  x^+\mathrm{d} x^-   +  (x_1^2 + x_2^2 ) (\mathrm{d} x^- )^2  + \mathrm{d} x_1^2 + \mathrm{d} x_2^2 ,
\end{equation} 
where
\begin{equation}
\tilde x^+ = x^+ \mp \frac{x_3^2 }{2} .
\end{equation} 
Note that quotienting along $X_8^+$ and along $X_8^-$ results in the same metric (\ref{eq:nw-metric-2}). This is  a 4-dimensional Cahen--Wallach space with quadratic form $ A =\eta^{ 0,2 }$. Note also that reduction along $X_6 $ and along $X_8^\pm$ results in the same quotient manifold, which makes intuitive sense since $X_6$ can be obtained from $X_8^\pm$ in the limit $x^- \rightarrow \pm\infty$ where $x^-$ is the parameter appearing in (\ref{conjacasy}).

For $X_9  $, taking for simplicity $d_3 =0 $, we have
\begin{equation}
\xi_{ X_9 } = \cosh x^-  \partial_3 - d_1 \sinh x^- \partial_1 + ( d_1 x_1  \cosh x^- - x_3 \sinh x^- ) \partial_+ ,
\end{equation} 
and $ \xi^\perp =\operatorname{Span} (\partial_+ , \partial_2 , \chi_1 , \chi_2 )$ with
\begin{equation}
\begin{split} 
\chi_1 &=d_1 \sinh x^- \partial_3 + \cosh  x^-\partial_1 ,\\
\chi_2 &=d_1 \sinh  x^- \partial_- + ( d_1 x_1  \cosh x^- - x_3 \sinh x^- )\partial_1 .
\end{split} 
\end{equation} 
In terms of coordinates $(\tilde x^+ , x^- , u, x^2 )$ well-defined on the quotient we have
\begin{equation}
h =2 \mathrm{d} \tilde x^+ \mathrm{d} x^- + x_2^2 ( \mathrm{d} x^- )^2 +  \frac{ (\mathrm{d} u -2u \coth (2 x^- ) \mathrm{d} x^- )^2 }{\cosh^2 x^-  + d_1^2 \sinh^2 x^- } + \mathrm{d} x_2^2   ,
\end{equation} 
where
\begin{equation}
\begin{split}
 u &=d_1 x^3  \sinh x^-  + x^1 \cosh x^- , \\
\tilde x^+ &=x^+  + \frac{x_1^2 }{2} \coth x^- + \frac{x_3^2 }{2} \tanh x^- .
\end{split}
\end{equation} 

The hereditary isometry algebra is
\begin{equation}  
\mathfrak{l }=\frac{ N_{X_\xi }(\mathfrak{g} \rtimes \mathfrak{so } (3)) }{ \mathbb{R}   X_\xi  },
\end{equation} 
where  $ N_{X_\xi }(\mathfrak{g}  \rtimes \mathfrak{so } (3)  )  $ is the normaliser of $X_\xi $ in $ \mathfrak{g}  \rtimes \mathfrak{so }(3) $.
For the KVFs of Proposition \ref{kvfcw} $ \mathfrak{l }$ is given in Table \ref{geoquotcw}. 
\begin{table}[htp]
\caption{Hereditary isometry algebra $\mathfrak{l }$ of the KK reductions of $\mathrm{CW}_5$  by the KVFs of Proposition \ref{kvfcw}.}
\centering
\begin{tabular}{|c|c|c|}
\hline 
$X \in \mathfrak{g}  \rtimes \mathfrak{so }(3) $ & $\mathfrak{l } $ & generators \\ \hline
$X_1 , \, b =0  $& $\mathbb{R}  \oplus \mathfrak{so }(3) $& $e_+ ; V_1 , V_2 , V_3$ \\ \hline
$X_1, \,  b \neq 0 $&$\mathbb{R}  \oplus \mathfrak{so }(2) $ & $e_+ ; V_3$ \\ \hline
$X_2  $&$\mathbb{R}  \oplus \mathfrak{so}(2) $ & $e_+ ; V_3$ \\ \hline
$X_4^\pm  $&$\mathbb{R}  \oplus \mathfrak{so}(2) $ & $e_+ ; V_3 $ \\ \hline
$X_6 $ & $\mathfrak{h}  \rtimes \mathfrak{so}(2)  $ & $e_+ , e_2 ,  e_2^\ast , e_1 , e_1^\ast ; V_3 $\\ \hline
$X_8^\pm  $ & $\mathfrak{h}  \rtimes (\mathbb{R}  \oplus \mathfrak{so}(2) )$ & $e_+ ,  e_2 ,  e_2^\ast , e_1 , e_1^\ast ; e_- ,V_3 $\\ \hline
$ X_9 $& $\mathfrak{h}  $ & $e_+ , e_2 , e_2^\ast, e_1  + d_1 e_3^\ast , e_1^\ast $ \\\hline
\end{tabular}
\label{geoquotcw}
\end{table}
The  Lie algebra $\mathfrak{h}   $ is the 5-dimensional Heisenberg algebra generated by $\{e_1 , e_1^\ast , e_2 , e_2^\ast , e_+ \} $ in the case of $X_6 $, $X_8^\pm$, and by $\{e_1 + d_1e_3^\ast , e_1^\ast , e_2 , e_2^\ast , e_+ \} $ in the case of $X_9$. Note that  while the quotient by $X_6$ is a  CW space, cfr.~equation (\ref{eq:nw-metric-1}), the hereditary isometry algebra associated to $X_6$ is only a proper subalgebra of the CW isometry algebra. Therefore this is an example where the quotient has additional accidental symmetry.

\section{Conclusions and Summary}
\label{sec:conc-summ}

In this paper we have classified four-dimensional Kaluza--Klein
reductions of certain supersymmetric five-dimensional lorentzian
geometries found in \cite{Beckett:2021cwx}; namely,
\begin{enumerate}
\item $-\RR \times S^4$;
\item $\RR \times \AdS_4$; and
\item a conformally flat Cahen--Wallach symmetric space.
\end{enumerate}
We have concentrated on reductions leading to four-dimensional
lorentzian or riemannian manifolds.  Although we do not consider them
in this paper, the question of null reductions is interesting in the
context of non-relativistic supersymmetry and studying the null
reductions of the above backgrounds might give four-dimensional
supersymmetric Newton--Cartan geometries different from those in
\cite{Figueroa-OFarrill:2019ucc} in a similar way to how
three-dimensional supersymmetric Newton--Cartan geometries can be
obtained via null reduction of four-dimensional supersymmetric
geometries \cite{Bergshoeff:2020baa}.

For the three backgrounds listed above, we list the possible
one-parameter subgroups of isometries resulting in a lorentzian or
riemannian quotient, identify the hereditary isometries of the four-dimensional
quotient, the fraction of the supersymmetry which is preserved and, in
most cases, the form of the metric in the quotient. The possible
generators of one-parameter subgroups are given in
Proposition~\ref{kvfsn} for $-\RR \times S^4$,
Proposition~\ref{kvfads} for $\RR \times \AdS_4$ and
Proposition~\ref{kvfcw} for the Cahen--Wallach spacetime. The
hereditary isometries of the quotients are listed in Table~\ref{s4quotiso} for
$-\RR \times S^4$, Table~\ref{adsgeneratorslie} for
$\RR \times \AdS_4$ and Table~\ref{geoquotcw} for the Cahen--Wallach
space. The conditions for preservation of supersymmetry and the
fraction of supersymmetry which is preserved upon reduction are
described in Proposition~\ref{psusys4} for $-\RR \times S^4$ and
listed in Table~\ref{presusyads} for $\RR \times \AdS_4$ and
Table~\ref{frusycw} for the Cahen--Wallach space.

It is worth highlighting some of the half-BPS reductions; that is,
those which preserve half of the supersymmetry.  Firstly, there are
four half-BPS lorentzian reductions: one (labelled $\lambda_4$ with
parameter $\beta = 2\|\varphi\|$) of $\RR \times \AdS_4$ and three
(labelled $X_6, X_8^\pm, X_9$) of the Cahen--Wallach space.  These
give four-dimensional lorentzian geometries admitting an $N=1$
supersymmetry algebra.  It is then a natural question to ask whether
they are contained in the classification of \cite{deMedeiros:2016srz}.
The geometries in that paper have supersymmetry algebras which are
filtered deformations of graded maximally supersymmetric subalgebras
of the $N=1$ Poincar\'e superalgebra and consist of Minkowski spacetime,
$\AdS_4$, $-\RR \times S^3$, $\RR \times \AdS_3$ and the Nappi--Witten
group $\NW_4$.  They all share the property that the metric is
conformally flat.  The reductions labelled $X_6$ and $X_8^\pm$ of the
Cahen--Wallach space are isometric to the Nappi--Witten group, as
shown by the explicit form of the quotient metric in
equations~\eqref{eq:nw-metric-1} and \eqref{eq:nw-metric-2}.  The
other reduction $(X_9)$ of the Cahen--Wallach space depends on two
parameters $d_1 \neq 0$ and $d_3$.  We have calculated the Weyl
curvature tensor in the case   $d_3 = 0$ and found it to be non-vanishing, so that this reduction is not conformally flat, in
contrast to the backgrounds in \cite[Thm.~14]{deMedeiros:2016srz}.
We have not calculated the curvature tensor for any $d_3 \neq 0$, but
it seems likely that they are not conformally flat reductions either.  Finally,
the half-BPS lorentzian reduction of $\RR \times \AdS_4$
is novel to the best of our knowledge.  Indeed, calculating the
Weyl curvature tensor of the metric in
equation~\eqref{eq:new-lor-4d-metric-ads} one sees that it too is non-vanishing.

\section*{Acknowledgements}
GF would like to thank the Simons Foundation for its support under the Simons Collaboration on Special Holonomy in Geometry, Analysis and Physics (grant 488631).

\appendix
\section{Cahen--Wallach spaces}
\label{cwgener} 

Cahen--Wallach spaces are  locally symmetric lorentzian manifolds which exist for any dimension $n  \geq 3 $. 
We first construct them as symmetric spaces, following \cite{Figueroa-OFarrill:2001hal} for the most part,  and then derive a coordinate expression for their metric.

 Let $V$ be an $(n-2)$-dimensional vector space with basis $ (e_i )$, $V^\ast $ its dual with basis $(e_i^\ast )$, $Z=\operatorname{Span} (e_+ )$, $Z^\ast = \operatorname{Span} (e_- ) $. Let $A $ be a symmetric bilinear form on $V$. Then the Lie algebra
 \begin{equation}
\label{jksdhkjds} 
\mathfrak{g}   =V \oplus V^\ast \oplus Z \oplus Z^\ast 
 \end{equation}
  with non-zero  Lie brackets
\begin{equation}
[ e_- , e_i ] = e_i^\ast , \quad [ e_- , e_i^\ast ]= A_{ ij }e_j , \quad [ e_i^\ast , e_j ] = A_{ ij }e_+ ,
\end{equation} 
is solvable since $\mathfrak{g} ^\prime = \operatorname{Span}  \left( e_i^\ast , e_i , e_+ \right) $, $\mathfrak{g} ^{ \prime \prime } =\operatorname{Span} (e_+ ) $, $\mathfrak{g} ^{ \prime \prime \prime } =0 $. Set
\begin{equation}
\mathfrak{h}  = \operatorname{Span} (e_i^\ast ), \quad \mathfrak{k}  = \operatorname{Span} (e_i , e_+ , e_- ).
\end{equation} 
Then $\mathfrak{h}  $ is an Abelian subalgebra of $\mathfrak{g}$, $[ \mathfrak{h}, \mathfrak{k}  ] \subset \mathfrak{k}  $, $[ \mathfrak{k}  , \mathfrak{k}  ]= \mathfrak{h}  $. Thus $\mathfrak{g}  =  \mathfrak{h}  \oplus \mathfrak{k}   $ is a symmetric splitting of $\mathfrak{g}$. Note that $  \{ e_i , e_i^\ast , e_+ \} $ generates a $(2n + 1 )$-dimensional Heisenberg algebra. Let $G$ (respectively $H$) be the unique simply connected Lie group with Lie algebra $\mathfrak{g}$  ($ \mathfrak{h}  $). Then $M = G / H $ is a locally symmetric space. Denote by $o$ the identity coset, $o =e H $.

Let $B$ be the symmetric bilinear form on $V \oplus Z \oplus Z^\ast \simeq  \mathfrak{g}  / \mathfrak{h}  $    with non-zero components
\begin{equation}
\label{Bform}
B (e_i , e_j ) = \delta_{ ij }, \quad 
B (e_+, e_- )=1.
\end{equation} 
Note that $B$ is invariant under the action of $H$ by conjugation. Since $H$ is Abelian we can simply check that $h =\exp (c e_i^\ast ) $ acts by isometries. Using (\ref{eistej}), (\ref{eisteminus}) we have e.g.
\begin{equation}
\begin{split} 
B (h e_- h^{-1} , h e_- h^{-1} ) &
= B \left( e_- - c A_{ ij }e_j -\frac{  c^2}{2}  A^2_{ ii } e_+, e_- - c A_{ ik }e_k - \frac{ c^2}{2}A^2_{ ii }e_+ \right) \\ &
=-c^2  A^2_{ ii } +c^2   A_{ ij }A_{ ik } \delta_{ jk } =0 =B (e_- , e_- ).
\end{split} 
\end{equation} 
The other components can be checked similarly. 

Define a lorentzian $G$-invariant metric $\beta$ on $G $ by setting, for  any $g \in G  $,
\begin{equation}
\beta_g (U , V ) = B ( g^{-1} U , g^{-1}V ).
\end{equation} 
The left action of $H$ on $T M $ corresponds to the action of $ \operatorname{Ad}_h $ on $\mathfrak{g}  / \mathfrak{h}  $. Since $B$ is $H $-invariant, $\beta$ descends to a  well defined  metric on $M$, which we still denote by $\beta$,
\begin{equation}
\label{metrickhjsdaoiuqew} 
\beta_{g \cdot o } (U , V ) = B ( g^{-1} U , g^{-1}V ),
\end{equation} 
where we have identified $ T_o M $ with $\mathfrak{g}  $, and more generally we identify $ T_{ g \cdot o }M $ with $ T_g G $. 

We now derive a coordinate expression for $\beta$. On $ \mathbb{R} ^n $ take coordinates $x^1  , \ldots, x^{ n-2}$, $x^+$, $x^- $ and define
\begin{equation}
\label{modexpcoords} 
\begin{split} 
\sigma &: \mathbb{R} ^n \rightarrow G , \quad  ( x^i , x^+ , x^- ) \mapsto \exp (  x^+ e_+ + x^- e_- ) \exp ( x^i e_i ).
\end{split} 
\end{equation} 
The map $\sigma$ provides \emph{modified exponential coordinates} on $G$ and, acting on $o$, on $M$.
The curve $ ( x^1, \ldots , x^{ n-1 }, x^+ , x^- + t )$ on $\mathbb{R} ^n $  has tangent $ \partial / \partial x^- $, thus
\begin{equation}
\label{sigmaminus} 
\sigma_\ast (\partial_- ) = \ddt \sigma (x^i , x^+ , x^- + t ) = e_- \sigma (x).
\end{equation} 
Note that
\begin{equation}
\begin{split} 
\sigma (x)^{-1} e_- \sigma (x) &
=\exp (- x^j e_j )e_- \exp ( x^k e_k )
=\exp (-x^k  \operatorname{ad}_{ e_k } )(e_- )\\ &
=e_-  + x^i  e_ i^\ast  + \frac{1}{2} x^i x^j A_{ ij }e_+ .
\end{split} 
\end{equation} 
Similarly 
\begin{align}
\label{sigmaplus} 
\sigma_\ast (\partial_+ ) &=e_+ \sigma (x) =\sigma (x) e_+,\\
\label{sigmai} 
\sigma_\ast (\partial_i ) &= \sigma (x) e_i .
\end{align} 
Pulling back (\ref{metrickhjsdaoiuqew}) by $\sigma $ we have
\begin{equation}
(\sigma^\ast \beta)_x (U ,V ) 
= \beta_{ \sigma  (x) } (\sigma _\ast U , \sigma  _\ast V )
= B ( \sigma (x)^{-1} ( \sigma_\ast U ), \sigma (x)^{-1}  (\sigma _\ast V) ).
\end{equation} 
Writing $g =\sigma^\ast \beta $ and taking into account that terms in $e_i^\ast $ are zero on the quotient space we  find
\begin{equation}
\begin{split} 
g_x ( \partial_i , \partial_j ) &= \delta_{ ij },\\
g_x (\partial_+ , \partial_- ) &= B ( e_+ , \sigma (x)^{-1} e_- \sigma (x)  ) = B (e_+ , e_- ) =1,\\
g_x (\partial_- , \partial_- ) &= B (\sigma (x)^{-1} e_- \sigma (x) , \sigma (x)^{-1} e_- \sigma (x)  )
=x^i x^j A_{ ij },\\
g_x (\partial_i , \partial_{ \pm } )&= g_x (\partial_+ , \partial_+ ) =0.
\end{split} 
\end{equation} 
Therefore, with respect to the modified exponential coordinates (\ref{modexpcoords}),  the $G$-invariant metric (\ref{metrickhjsdaoiuqew}) on $M$ is
\begin{equation} 
\label{cwametric} 
g = 2 \mathrm{d} x^+ \mathrm{d} x^- + A_{ ij } x^i x^j \, (\mathrm{d} x^- )^2  +  \delta_{ ij }\mathrm{d} x^i \mathrm{d} x^j .
\end{equation} 

 The vector field $\varphi =\partial_+ $ is null and parallel.
All the coordinates $ x^i , x^\pm $ range in $\mathbb{R}  $ and the resulting space is complete.  
It is known that a CW space is indecomposable if and only if $A$ is non-degenerate.
The only non-zero components of Riemann and Ricci tensors Ricci tensor in the coordinates (\ref{cwametric}) are
\begin{equation}
R_{ -i-j }=-  A_{ ij }, \quad 
 \operatorname{Ric} _{ -- } = - \operatorname{Tr} A,
\end{equation} 
hence CW spaces are scalar-flat. The non-vanishing components of the Weyl tensor are
\begin{equation} 
\label{weylcw} 
W_{-i-j} = - A_{ ij } +  \frac{1}{n-2}  \operatorname{Tr} A \,   \eta^{ 0,n-2}_{ ij },
\end{equation} 
so a CW space is conformally flat if and only if  $A = a \eta^{ 0, n-2 } $ for some constant $a$.

\subsection{Killing vector fields}
\label{cwkvfasdlk} 
It is clear from the symmetric space description that a CW space has isometry algebra 
\begin{equation} 
\mathfrak{g}  \rtimes \mathfrak{s},
\end{equation} 
where
\begin{equation}
\mathfrak{g}  =V \oplus V^\ast \oplus Z \oplus Z^\ast 
\end{equation} 
as in (\ref{jksdhkjds})  and  
\begin{equation}
\label{defssss} 
 \mathfrak{s } =\{ s \in \mathfrak{ so }(V) : s^T A + A s=0 \} 
 \end{equation}
is  the subalgebra of $ \mathfrak{so }(V) $ leaving $A$ invariant. 
The action of $\mathfrak{s} $ on $\mathfrak{g}  $ is the natural action of $\mathfrak{so }(V) $ on $V$,  the adjoint action on $ V^\ast $, and the trivial action on $Z$, $Z^\ast $. Such an action preserves brackets. In fact for $s \in \mathfrak{s} $ we have e.g. 
\begin{equation}
\begin{split} 
s \cdot [ e_- , e_i ] &= [ s \cdot e_- , e_i ] + [ e_- , s \cdot e_i ] 
=0 +  [ e_- ,  s^j_{ \ i } e_j]
= s^j_{ \ i } e_j^\ast 
=-  s ^i_{ \ j } e_j^\ast 
=s \cdot e_i^\ast ,\\
s \cdot [ e_- , e_i^\ast  ] &
=  [ e_- ,  -s^i_{ \ j } e_j^\ast ]
=-  s ^i_{ \ j } A_{ jk }e_k =- A_{ ik }s^k_{ \ j }e_k = s \cdot A_{ik}e_k ,
\end{split} 
\end{equation}  
having used  $s^T =- s $, $s ^T A =- A s $.
We can therefore form the semidirect product $ \mathfrak{g}  \rtimes \mathfrak{s} $ with, for $s_i \in \mathfrak{s} $, $g_i \in \mathfrak{g}  $, bracket
\begin{equation}
\label{sgbra} 
[ ( s_1 , g_1 ), (s_2 , g_2 )]
=([ s_1 , s_2 ], [ g_1 , g_2 ] + s_1 \cdot g_2 - s_2 \cdot g_1 ).
\end{equation} 

We now want to write down the Killing vector fields corresponding to the generators $ (e_i , e^\ast_i , e_+ , e_- )$ of $\mathfrak{g}$   in terms of the modified exponential coordinates $(x^i , x^+ , x^- ) $.  We define an action $x \mapsto g \cdot x $ of $G$ on the exponential coordinates so that (\ref{modexpcoords}) is $G$-equivariant,
\begin{equation}
g \sigma (x) \cdot o = \sigma (g \cdot x ) \cdot o.
\end{equation} 
It follows $ \sigma (g \cdot x )^{-1} g \sigma (x) \in H $ as it fixes $o$. Thus, taking  $ X \in \mathfrak{g}  $, $g =\exp (t X )$ to be the corresponding one-parameter subgroup of $G$, we can write
\begin{equation}
\exp (t X ) \sigma (x) = \sigma (\exp (t X ) \cdot x )  h (t, X )
\end{equation} 
for some $h \in H $ depending on $t, X $. Clearly  $ h (0,X )= e $ for any $X$. Differentiating and evaluating at $t =0 $ we get
\begin{equation}
\label{mainkvfeq} 
 X \sigma (x)  = \sigma_\ast |_x (\xi_X ) + \sigma (x)  Y 
\end{equation} 
where $\xi_X $ is the KVF corresponding to $X \in \mathfrak{g}  $ with respect to modified exponential coordinates $x$ and $Y \in \mathfrak{h}  $.

As calculated before, see (\ref{sigmaplus}), (\ref{sigmaminus}), (\ref{sigmai}),
\begin{align} 
\label{kvfpm} 
\sigma_\ast ( \partial_\pm |_x ) &= e_\pm \sigma (x) ,\\
\sigma_\ast (\partial_i |_x )&= \sigma (x) e_i .
\end{align} 
Comparing (\ref{kvfpm})  with (\ref{mainkvfeq}) we immediately see that
\begin{equation}
\xi_{ e_\pm } |_x 
= \partial_{ \pm } |_x .
\end{equation} 

For $\xi_{ e_i }|_x $ we calculate, writing $\sigma $ for $\sigma (x) $,
\begin{equation}
\begin{split}
e_i \sigma &
= \sigma  \sigma^{-1} e_i \sigma 
= \sigma (\exp (- x^k e_k ) \exp (- x^- e_- )e_i \exp ( x^- e_- )\exp (x^k e_k )  )\\  &
=\sigma (\exp (- x^k \operatorname{ad}_{ e_k } ) )(\exp (- x^- \operatorname{ad}_{ e_- } )(e_i ))
=\sigma (\exp (- x^k \operatorname{ad}_{ e_k } ) )\left( C e_i - \frac{S }{\sqrt{ |\lambda_i |}  } e_i^\ast\right) \\ &
=\sigma \left( C e_i - \frac{S }{\sqrt{ |\lambda_i |}} (e^\ast_i + x^i \lambda_i e_+  )\right) 
=\sigma_\ast \left( C \partial _i  - \mathrm{sign} (\lambda_i ) \sqrt{ | \lambda_i |}S x^i \partial_+ \right)- \frac{S }{\sqrt{| \lambda_i |}} e_i^\ast ,
\end{split}
\end{equation} 
having used the equations given in Section \ref{conjac} and with $C$, $S$  given by (\ref{cands}).
It follows
\begin{equation}
\xi_{ e_i }|_x = C \partial_i |_x - \operatorname{sign} (\lambda_i )\sqrt{ |\lambda_i |} S x^i  \partial_+ |_x \quad \text{(no sum over $i$)}  .
\end{equation} 

Finally for $\xi_{ e_i^\ast }|_x  $ we have, again writing $\sigma$ for $\sigma (x) $,
\begin{equation}
\begin{split}
e_i^\ast \sigma  &
=\sigma \sigma^{-1} e_i^\ast \sigma 
=\sigma \left( \exp (- x^k \operatorname{ad}_{ e_k } ) (\exp (- x^- \operatorname{ad}_{ e_- } )(e_i^\ast ) \right) \\ &
=\sigma \left( \exp (- x^k \operatorname{ad}_{ e_k } ) (C e_i^\ast - \operatorname{sign} (\lambda_i ) \sqrt{ | \lambda _i |} S e_i  ) \right) \\ &
=\sigma \left( C e_i^\ast + C x^i \lambda_i  e_+ -\operatorname{sign} (\lambda_i ) \sqrt{ |\lambda_i |} S e_i  \right) \\ &
=\sigma ( C e_i^\ast ) + \sigma_\ast  \left(  C x^i \lambda_i  \partial _+ -\operatorname{sign} (\lambda_i ) \sqrt{ |\lambda_i |} S \partial _i  \right) ,
\end{split}
\end{equation} 
thus
\begin{equation}
\xi_{ e_i^\ast } |_x 
= C x^i \lambda_i \partial_+ |_x - \operatorname{sign} (\lambda_i )\sqrt{ |\lambda_i |} S \partial_i |_x  \quad \text{(no sum over $i$)}.
\end{equation} 

In summary
\begin{align} 
\xi_{ e_ \pm } &= \partial_{ \pm },\\
\xi_{ e_i }&= C \partial_i - \operatorname{sign} (\lambda_i )\sqrt{ |\lambda_i |} S x^i  \partial_+  \quad \text{(no sum over $i$)},\\
\xi_{ e_i^\ast } &
= C x^i \lambda_i \partial_+ - \operatorname{sign} (\lambda_i )\sqrt{ |\lambda_i |} S \partial_i   \quad \text{(no sum over $i$)}.
\end{align} 
Note that $\xi_{ e_i^\ast }$ vanishes for $x =0 $, as it must.

Let us consider in more detail the case relevant for the paper,  $n =5 $, $ A = a \eta^{0,3 }$, $a>0 $. In this case $ \mathfrak{s } =\mathfrak{so } (3) $ and we choose generators $(V_1 , V_2 , V_3 )$ satisfying
\begin{equation}
[ V_i, V_j ] = -\epsilon_{ ijk }V_k 
\end{equation} 
so that the associated KVFs $ \xi_{ V_i }$ are the usual  generators of rotations in $\mathbb{R} ^{ 0,3 }$,
\begin{equation}
\xi_{ V_i } = \epsilon_{ ijk } x_j \partial_k .
\end{equation}
By (\ref{sgbra}) the non-trivial brackets of $\mathfrak{so } (3) $ with $\mathfrak{g}$   are
\begin{equation}
\label{hhvvvvi} 
[ V_i , e_j ] =  V_i e_j = -\epsilon_{ ijk }e_k, \quad [ V_i , e^\ast_j ] =  V_i e^\ast _j =- \epsilon_{ ijk }e^\ast_k.
\end{equation}

For use in the main text, we give here the inner products with respect to the metric (\ref{cwametric}) of the KVFs $(\xi_{ e_i }, \xi_{ e_i^\ast }, \xi_{ e^\pm }, \xi_{ V_i })$ for $n =5 $, $ A =a \eta^{ 0,3 }$, $a >0 $.
 \begin{equation}
 \label{kvfprodddddd} 
 \begin{split}
 |\xi_{ e_+ }|^2 &= \langle \xi_{ e_i }, \xi_{ e_+ } \rangle  =\langle \xi_{ e_i^\ast }, \xi_{ e_+ } \rangle = \langle\xi_{ V_i} , \xi_{ e_\pm } \rangle = 0,\\
 \langle \xi_{ e_+ }, \xi_{ e_- }\rangle &=1,\\
 |\xi_{ e_- }|^2 &= a |x |^2 ,\\
 \langle \xi_{ e_i }, \xi_{ e_- }\rangle &= - \sqrt{ a }x_i \sinh  (\sqrt{a} x^-  ),\\
  \langle \xi_{ e^\ast _i }, \xi_{ e_- }\rangle &=  a x_i \cosh  (\sqrt{a} x^-  ),\\
 \langle \xi_{ e_i }, \xi_{ e_j  }\rangle &=\cosh^2 (\sqrt{a} x^-  )\delta_{ ij },\\
 \langle \xi_{ e_i }, \xi_{ e^\ast_j }\rangle &=- \sqrt{ a } \sinh(\sqrt{a} x^-  ) \cosh (\sqrt{a} x^-  ) \delta_{ ij },\\ 
  \langle \xi_{ e^\ast _i }, \xi_{ e_j^\ast }\rangle &=a \sinh^2 (\sqrt{a} x^-  )\delta_{ ij },\\
  \langle \xi_{V_i} ,\xi_{  V_j} \rangle &= |x |^2 \delta_{ ij } - x_i x_j ,\\
\langle \xi_{V_i} , \xi_{ e_j }\rangle &= - \cosh(\sqrt{a} x^-  ) x_k \epsilon_{ kij },\\
\langle \xi_{V_i} , \xi_{ e^\ast_j }\rangle &=  \sqrt{ a }\sinh(\sqrt{a} x^-  ) x_k \epsilon_{ kij }.
 \end{split}
 \end{equation} 
In particular $ \xi_{ e_i }$ is spacelike, $ \xi_{ e_+ }$ is null, and  $\xi_{ e_i^\ast }$, $\xi_{V_i} $, $\xi_{ e_- }$ have non-negative norm.

\subsection{Inner automorphisms of $\mathfrak{g }\rtimes \mathfrak{s} $}
\label{conjac} 
For use in the main text and general reference we record the following expressions,
\begin{align}
\operatorname{Ad}_{   \exp \left(   c_k e_k \right)  } (e_i^\ast )
&= e_i^\ast  -  A_{ ik } c_k  e_ +  ,\\
\operatorname{Ad}_{   \exp \left(  c_k e_k \right)  } (e_- )
&= e_-  -   c_k  e_ k^\ast  + \frac{1}{2} c_i c_j A_{ ij }e_+   ,\\
\operatorname{Ad}_{   \exp \left(   c_k e_k \right)  } (e_i   )& =e_i ,\\
\operatorname{Ad}_{   \exp \left(   c_k e_k \right)  } (e_ +    ) &=e_ +  ,\\
\label{eistej} 
\operatorname{Ad}_{   \exp \left(   c_k e_k^\ast  \right)  } (e_i  )
&=e_i + c_k  A_{  ik } e_+ ,\\
\label{eisteminus} 
\operatorname{Ad}_{   \exp \left(   c_k e^\ast _k \right)  } (e_- )
&=e_- -  c_kA_{ kj }  e_j - \frac{1}{2}  c_i c_k A_{ ij } A_{jk } e_+ ,\\
\operatorname{Ad}_{   \exp \left(   c_k e^\ast _k \right)  } (e^\ast_i   )& =e^\ast_i ,\\
\operatorname{Ad}_{   \exp \left(   c_k e_k^\ast  \right)  } (e_ +    ) &=e_ + .
\end{align} 

Furthermore, assuming that $A $ has been diagonalised, 
\begin{equation}
A_{ ij } =\lambda_i \delta_{ ij },
\end{equation} 
for $\lambda_i \neq 0 $ we have
\begin{align}
\label{aaaa1} 
\exp ( x^-  \operatorname{ad}_{ e_- } )(e_i )  &= C e_i  + \frac{ S}{\sqrt{| \lambda_i |} }e_i^\ast ,\\
\label{aaaa2} 
\exp ( x^-  \operatorname{ad}_{ e_- } )(e_i^\ast )  &= C e_i ^\ast + \operatorname{sign} ( \lambda_i )\sqrt{| \lambda_i |} S e_i  ,
\end{align} 
where 
\begin{equation}
\label{cands} 
C =\begin{cases}
\cos ( \sqrt{ -\lambda_i  } x^- )&\text{if $\lambda_i <0 $},\\
\cosh ( \sqrt{ \lambda_i  } x^- )&\text{if $\lambda_i >0 $},
\end{cases} \qquad 
S =\begin{cases}
\sin  ( \sqrt{ -\lambda_i  } x^- )&\text{if $\lambda_i <0 $},\\
\sinh ( \sqrt{ \lambda_i  } x^- )&\text{if $\lambda_i >0 $}.
\end{cases} 
\end{equation} 
For $\lambda_i =0 $ instead 
\begin{equation}
\exp ( x^{-}  \operatorname{ad}_{ e_- } ) (e_i ) =e_i +  x^{-}  e_i^\ast , \quad 
\exp ( x^{-}  \operatorname{ad}_{ e_- } ) (e_i^\ast  ) =e_i^\ast ,
\end{equation} 
so (\ref{aaaa1}), (\ref{aaaa2}) still hold provided that we extend the definition of $C$ and $S$ by setting
\begin{equation}
C =\begin{cases}
1 &\text{if $\lambda_i =0 $} ,\\
\cos ( \sqrt{ -\lambda_i  } x^- )&\text{if $\lambda_i <0 $},\\
\cosh ( \sqrt{ \lambda_i  } x^- )&\text{if $\lambda_i >0 $},
\end{cases} \qquad 
S =\begin{cases}
0 \ \text{and } S / \sqrt{ |\lambda_i |} \rightarrow x^-  &\text{if $\lambda_i =0 $} ,\\
\sin  ( \sqrt{ -\lambda_i  } x^- )&\text{if $\lambda_i <0 $},\\
\sinh ( \sqrt{ \lambda_i  } x^- )&\text{if $\lambda_i >0 $}.
\end{cases} 
\end{equation} 

Using (\ref{hhvvvvi}) we also have
\begin{align} 
\operatorname{Ad}_{   \exp \left(  \sum_k c_k e_k \right)  } (V_i   )&
= V_i  -  c_k  \epsilon_{ kij } e_j  ,\\
\operatorname{Ad}_{   \exp \left(  \sum_k c_k e_k^\ast \right)  } (V_i   )&
= V_i  -  c_k  \epsilon_{ kij } e_j^\ast ,
\end{align} 
and, for $(i,j,k )$ a permutation of $(1,2,3 )$,
\begin{align} 
\operatorname{Ad}_{   \exp \left(  t V_i  \right)  } (V_j    )& =\cos t V_j  \mp   \sin  t V_k , \\
\operatorname{Ad}_{   \exp \left(  t V_i  \right)  } (e_j    )& =\cos t e_j  \mp   \sin  t e_k , \\
\operatorname{Ad}_{   \exp \left(  t V_i  \right)  } (e_j^\ast    )& =\cos t e_j^\ast  \mp   \sin  t e_k^\ast ,
\end{align} 
with the upper (lower) sign for an even (odd) permutation.

\section{Gamma matrices}
\label{gmatr} 

It is useful to have an
explicit representation of $\Cl(1,4) $ to compute the fraction of preserved SUSY. Let
$(\sigma_1 , \sigma_2 , \sigma_3 )$ be the Pauli matrices, $I_2 $ the
$2$-dimensional identity matrix.

In Section \ref{psusys4} we used
\begin{equation}
\label{mainclrep} 
\begin{split}
\gamma_0 &=\sigma_3 \otimes I_2   
=\left(\begin{array}{c|c} I_2  & 0  \\\hline 0 & -I_2 \end{array}\right),\\
\gamma_1 &= (i \sigma_2 )\otimes \sigma_1 
=\left(\begin{array}{c|c}0 & \sigma_1  \\\hline - \sigma_1 & 0\end{array}\right),\\
\gamma_2 &= (i \sigma_2 )\otimes \sigma_2 
=\left(\begin{array}{c|c}0 & \sigma_2  \\\hline - \sigma_2 & 0\end{array}\right),\\
\gamma_3 &= (i \sigma_2 )\otimes \sigma_3 
=\left(\begin{array}{c|c}0 & \sigma_3  \\\hline - \sigma_3 & 0\end{array}\right),\\
\gamma_4 &= (i \sigma_1 )\otimes I_2  
=\left(\begin{array}{c|c}0 & i I_2   \\\hline i I_2  & 0\end{array}\right),
\end{split}
\end{equation}
with $\gamma_0 =\gamma_1 \gamma_2 \gamma_3 \gamma_4  $ corresponding to the timelike direction and $(\gamma_1 , \gamma_2 , \gamma_3 , \gamma_4 )$ to the spacelike ones.

In Section \ref{psusyads4} for
 \begin{equation}
 \lambda_5 =R_{ 13 }- R_{ 34 }
 \end{equation}
we chose generators $(\gamma_1, \gamma_y , \gamma_3 , \gamma_4 , \gamma_5 )$, with $\gamma_1 $ corresponding to the timelike direction, and used
\begin{equation}
\label{rr1} 
\begin{split}
\gamma_1 &=( \sigma_3 )\otimes  I_2 
,\\
\gamma_y &=(i \sigma_2  )\otimes  \sigma_1 
,\\
\gamma_3 &=(i \sigma_2 )\otimes  \sigma_2 
,\\
\gamma_4 &=(i \sigma_2 )\otimes  \sigma_3 
,\\
\gamma_5 &=(i \sigma_1 )\otimes  I _2 .
\end{split}
\end{equation} 
For all the other KVFs we took generators  $(\gamma_2, \gamma_y , \gamma_3 , \gamma_4 , \gamma_5 )$, with $\gamma_2 $ corresponding to the timelike direction, and used
\begin{equation}
\label{rr2} 
\begin{split}
\gamma_2 &=( \sigma_3 )\otimes  I_2 
,\\
\gamma_y &=(i \sigma_2  )\otimes  \sigma_1 
,\\
\gamma_3 &=(i \sigma_2 )\otimes  \sigma_2 
,\\
\gamma_4 &=(i \sigma_2 )\otimes  \sigma_3 
,\\
\gamma_5 &=(i \sigma_1 )\otimes  I _2 .
\end{split}
\end{equation} 

Note that (\ref{mainclrep}), (\ref{rr1}), (\ref{rr2}) only differ by a relabelling of the gamma matrices.

In Section \ref{psusycw} we used the  representation  (\ref{mainclrep}). Additionally we defined
\begin{equation}
\gamma_{ \pm }  = \frac{1}{\sqrt{ 2 }} (\gamma_4 \pm \gamma_0 )
=\frac{1}{\sqrt{ 2 }}\left(\begin{array}{c|c}
\pm I_2  & iI _2  \\\hline i I _2 & \mp  I_2 
\end{array}\right).
\end{equation}

\providecommand{\href}[2]{#2}\begingroup\raggedright\endgroup

\end{document}